\documentclass[conference,9pt]{IEEEtran}
\usepackage{algorithmic}
\hyphenation{op-tical net-works semi-conduc-tor IEEE-Xplore}
\def\BibTeX{{\rm B\kern-.05em{\sc i\kern-.025em b}\kern-.08em
    T\kern-.1667em\lower.7ex\hbox{E}\kern-.125emX}}
\usepackage[pdfencoding=auto, psdextra]{hyperref}
\usepackage{amsmath,amsfonts,amsthm}
\usepackage{mathtools}
\usepackage{tikz}
\usetikzlibrary{angles, quotes, patterns, backgrounds}  
\usepackage{float}
\usepackage{placeins}
\usepackage{adjustbox}

\makeatletter
\def\fixedlabel#1#2{%
  \@bsphack%
  \protected@write\@auxout{}%
         {\string\newlabel{#1}{{#2}{\thepage}}}%
  \@esphack}
\makeatother

\newcommand{\bs}[1]{{#1}}

\usepackage{graphicx} 

\newcommand{\matr}[1]{\bm{#1}}
\usepackage{bm}
\usepackage{cite}
\newcommand{\vect}[1]{\bm{#1}}
\theoremstyle{definition}
\newtheorem{theorem}{Theorem}
\newtheorem{corollary}{Corollary}[theorem] 

\newtheorem{lemma}[]{Lemma}

\usepackage{dsfont}
\usepackage{algorithmic}
\usepackage{calc}
\usepackage{array}
\usepackage{tabularx}
\usepackage{physics}

\usepackage{makecell}

\usepackage[font=normalsize,labelfont=sf,textfont=sf, justification=centering]{subcaption}
\usepackage{textcomp}
\usepackage{stfloats}
\usepackage{url}
\usepackage{verbatim}
\usepackage{graphicx}
\usepackage{xcolor}
\usepackage{balance}
\usepackage[pdfencoding=auto, psdextra]{hyperref}
\usepackage{booktabs}
\usepackage{cases}

\usepackage{xstring}
\DeclareMathSymbol{\Theta}{\mathord}{operators}{"02}
\newcommand{\set}[1]{\mathcal{#1}}

\newcommand{\sqnormvec}[1]{\left|\left|#1\right|\right|^{2}_{2}}
\NewDocumentCommand{\msub}{o o m}
{
    \IfValueT{#2}{\left[#3\right]_{\set{#1}, \set{#2}}}
    \IfNoValueTF{#1}{#3}{\left[#3\right]_{\set{#1}}}
}
\NewDocumentCommand{\msubgen}{o o m}
{
    \IfValueT{#2}{\left[#3\right]_{#1, #2}}
    \IfNoValueTF{#1}{#3}{\left[#3\right]_{#1}}
}
\NewDocumentCommand{\matrsub}{ o o m }
{
    \IfValueT{#2}{[\matr{#3}]_{\set{#1}, \set{#2}}}
    \IfNoValueTF{#1}{\matr{#3}}{[\matr{#3}]_{\set{#1}}}
}
\NewDocumentCommand{\matrsubpow}{ o o m m}
{
    \IfValueT{#2}{[\matr{#3}^{#4}]_{\set{#1}, \set{#2}}}
    \IfNoValueTF{#1}{\matr{#3}^{#4}}{[\matr{#3}^{#4}]_{\set{#1}}}
}
\NewDocumentCommand{\vectsub}{ o o m }
{
    \IfValueT{#2}{[\vect{#3}]_{\set{#1}, \set{#2}}}
    \IfNoValueTF{#1}{\vect{#3}}{[\vect{#3}]_{\set{#1}}}
}
\NewDocumentCommand{\projbl}{ o o }
{
    \IfValueT{#2}{[\matr{\Pi}_{bl(\set{K})}]_{\set{#1},\set{#2}}}
    \IfNoValueTF{#1}{\matr{\Pi}_{bl(\set{K})}}{[\matr{\Pi}_{bl(\set{K})}]_{\set{#1}}}
}
\NewDocumentCommand{\proj}{ o o m }
{
    \IfValueT{#2}{[\matr{\Pi}_{\set{#3}}]_{\set{#1}, \set{#2}}}
    \IfNoValueTF{#1}{\matr{\Pi}_{\set{#3}}}{[\matr{\Pi}_{\set{#3}}]_{\set{#1}}}
}
\newcommand{\matrsubU}[1]{\matrsub[#1, K]{U}}

\NewDocumentCommand{\expect}{ o m }
{
    \IfNoValueTF{#1}{{\mathbb{E}}\left[{{#2}}\right] }{{\mathbb{E}}_{#1}\left[{{#2}}\right]}
}
\NewDocumentCommand{\prob}{ o m }
{
    \IfNoValueTF{#1}{{\mathbb{P}}\left({{#2}}\right) }{{\mathbb{P}}_{#1}\left({{#2}}\right)}
}
\NewDocumentCommand{\expectnobig}{ o m }
{
    \IfNoValueTF{#1}{{\mathbb{E}}[{{#2}}] }{{\mathbb{E}}_{#1}[{{#2}}]}
}

\begin{document}

\title{On the Impact of Downstream Tasks on Sampling and Reconstructing Noisy Graph Signals}

\author{\IEEEauthorblockN{Baskaran Sripathmanathan,} \and \IEEEauthorblockN{Xiaowen Dong,} \IEEEauthorblockA{University of Oxford} \and \IEEEauthorblockN{Michael Bronstein} }
\maketitle

\makeatletter
\def\ps@IEEEtitlepagestyle{
  \def\@oddfoot{\mycopyrightnotice}
  \def\@evenfoot{}
}
\def\mycopyrightnotice{
  {\footnotesize
  \begin{minipage}{\textwidth}
  \centering
  \copyright 2025 IEEE. Personal use of this material is permitted. Permission from IEEE must be obtained for all other uses, in any current or future media, including reprinting/republishing this material for advertising or promotional purposes, creating new collective works, for resale or redistribution to servers or lists, or reuse of any copyrighted component of this work in other works.
  \end{minipage}
  }
}

\begin{abstract}
We investigate graph signal reconstruction and sample selection for classification tasks. We present general theoretical characterisations of classification error applicable to multiple commonly used reconstruction methods, and compare that to the classical reconstruction error. We demonstrate the applicability of our results by using them to derive new optimal sampling methods for linearized graph convolutional networks, and show improvement over other graph signal processing based methods.
\end{abstract}

\begin{IEEEkeywords}
Graph signal processing, sampling, reconstruction, least squares, classification.
\end{IEEEkeywords}

\section{Introduction}
Signal processing and machine learning on graphs have become increasingly common, with many applications such as analysing and predicting brain fMRI signals \cite{itani2021graph}, urban air pollution \cite{jain2014big}, political preferences \cite{renoust2017estimating} and \bs{protein function \cite{szklarczyk2019string}}. However, data defined on a network domain, often called `graph signals', is frequently noisy and sometimes even incomplete, 
making such tasks challenging.

The graph signal processing (GSP) community has addressed the issue of noisy and incomplete data by generalising the tools of sampling and reconstruction from classical signal processing, via extending the classical shift operator to a graph shift operator \cite{ortega2018graph} such as the Laplacian. The majority of studies on graph-based sampling focus on designing efficient sampling schemes for criteria which approximate \cite{wang2018optimal, wang2019low, mfn, tremblay2017graph, jayawant2021doptimal,puy2018random, pakiyarajah2025graph} or bound \cite{bai2020fast,eoptimalchen} reconstruction loss, grounded in the optimal design of experiments \cite{pukelsheim2006optimal}, \bs{with some work simultaneously learning the underlying graph topology \cite{zhang2025graph, berger2020efficient, sridhara2024towards}}.  Theoretical results have also been derived on the performance of sample choice and signal reconstruction under reconstruction loss \cite{pesenson2008sampling, puy2018random, chamon2016near, chamon2017greedy, shomorony2014sampling, chen2016signal}. 
\bs{However, these results do not immediately carry over to other tasks such as classification, as they rely on the specific structure of reconstruction loss -- e.g., the proof of near-optimality of greedy sampling on graphs \cite{chamon2016near} relies on the loss's (approximate) supermodularity. Thus, there is a significant lack of investigation in the GSP literature of the impact of incomplete data on downstream learning tasks.} 

\bs{
The graph machine learning (GML) community, on the other hand, does address the issue of incomplete data, often in the context of downstream tasks such as node classification. They take two main approaches -- label prediction and feature imputation. Label prediction sidesteps the issue of missing data by directly predicting the missing node class labels, such as by label propagation \cite{zhou2003learning} or by using graph neural networks (GNNs) on the partial data with modified message passing \cite{jiang2020incomplete, chen2020learning}. Feature imputation involves reconstructing the graph signals, potentially alongside missing edges, usually as a pre-processing step to inference. Such methods include matrix completion methods \cite{kalofolias2014matrix} and autoencoder-based methods \cite{spinelli2020missing}. Another example is feature propagation \cite{rossi2021unreasonable}, which is a graph Laplacian-based feature imputation method that applies a GSP-motivated approach to downstream tasks such as classification with GNNs. Finally, some approaches combine label prediction and feature imputation in an end-to-end training framework \cite{taguchi2021graph, you2020handling}. While work in GML engages with missing data in settings with varied downstream tasks, it has done so through an empirical lens -- evaluating the performance of the proposed methods on specific datasets -- and thus theoretical results remain scarce.}

The separation of the two lines of work above points to an important aspect of dealing with incomplete data: sampling and reconstruction should be designed to serve the downstream task at hand and, depending on the task, the ``reconstruction error'' we care about might be different. For example, \bs{with recommender systems with incomplete participant data, our goal is to make effective recommendations (which can be formulated as a classification task) rather than reconstructing the attributes of the participants.}

In this paper, we fill this gap in the literature by providing a theoretical characterisation of the performance of binary classification \bs{with a linearized graph convolutional network (GCN)} under missing data and noisy observation (Theorem \ref{thm:general_class_err}). We provide a bound showing the relationship between classification and reconstruction error (Corollary \ref{corr:relationship}), and compare the change in classification versus reconstruction error as a function of the sample size both theoretically and empirically. 
\bs{We show that optimal sampling derived in the reconstruction setting can underperform random sampling when applied to classification tasks}. We then apply our theoretical results to derive a novel sample selection scheme for classification under a linearized GCN, and demonstrate that it improves upon using sample selection schemes which are optimal in the reconstruction setting. 

\section{Background} 

\subsection{Graphs and graph signals}
A graph $\mathcal{G}$ consists of a set $\set{V}$ of $N$ vertices, a set of edges $\set{E}$ between these vertices and a corresponding set of edge weights. We assume that $\mathcal{G}$ is connected and undirected. A graph signal $\vect{x}$ is a real-valued function $\vect{x}:\set{V} \to \mathbb{R}$, equivalently written as a vector in $\mathbb{R}^{N}$ by having each component correspond to a vertex.
We define \bs{the adjacency matrix $\matr{A}_{ij} = \mathds{1}\{(i,j) \in \set{E}\}$, the degree matrix $\matr{D}_{ij} = \text{deg}(i)\mathds{1}\{i = j\}$, the normalized adjacency matrix $\tilde{\matr{A}} = \matr{D}^{-\frac{1}{2}}\matr{A} \matr{D}^{-\frac{1}{2}}$ and the normalized augmented adjacency matrix $\tilde{\matr{A}}_{\gamma} = (\matr{D} + \gamma\matr{I})^{-\frac{1}{2}}(\matr{A} + \gamma\matr{I})(\matr{D} + \gamma\matr{I})^{-\frac{1}{2}}$ where $\matr{I}$ is the identity matrix. This augmentation can be understood as adding self-loops to $\mathcal{G}$, and note adding no loops yields $\tilde{\matr{A}}_{0} = \tilde{\matr{A}}$.}
Throughout this paper, we consider the symmetrically normalized Laplacian $\matr{L} = \matr{I} - \tilde{\matr{A}}$.

\subsection{Sampling of graph signals}
\bs{For any matrix $\matr{X}$,} we define $\matrsub[A,B]{X}$ to be the submatrix of $\matr{X}$ with row indices in $\set{A}$ and column indices in $\set{B}$. Similarly, we write $\vectsub[A]{x}$ for the subvector of $\vect{x}$ with indices in $\set{A}$. We let $\set{N} = \{1,\ldots,N\}, \bs{\set{D} = \{1,\ldots, d\}}$ and $\set{K} = \{1,\ldots,k\}$, and \bs{for any index set $\set{S}$, we write $\set{S}^{C} = \{ i \in \set{N} \mid i \notin \set{S}\}$}.
We write $\matr{L} = \matr{U}\matr{\Lambda}\matr{U}^{T}$ to be the eigendecomposition of $\matr{L}$, so the columns of $\matr{U}$ correspond to the eigenvectors of $\matr{L}$, ordered by eigenvalue in increasing order (all of which are nonnegative). We define $\projbl = \matrsubU{N}\matrsubU{N}^{T}$ to be the projection matrix which makes a signal $k$-bandlimited, i.e., it acts as a ideal bandpass filter, keeping only the components corresponding to the $k$ lowest eigenvalues of $\matr{L}$.

\subsection{Reconstruction Methods}
A \emph{reconstruction method} takes (potentially) noisy observations of a graph signal on a node sample set $\set{S}$.
We consider linear methods such as Least Squares (LS) or Feature Propagation (FP) \cite{rossi2021unreasonable}. This means that for a fixed sample set $\set{S}$, we can write the reconstruction operator as a matrix $\matr{R}_{\set{S}}:\mathbb{R}^{|\set{S}|} \to \mathbb{R}^{N}$: 
    \begin{align}
    \textrm{LS:\quad} \matr{R}_{\set{S}} &= \matrsubU{N}\matrsubU{S}^{\dagger} \label{eq:defn_RS:LS}\\
    \textrm{FP:\quad} \matr{R}_{\set{S}} &= \matrsub[N,S]{I} + \msubgen[\set{N},\set{S}^{C}]{\matr{I}}\msubgen[\set{S}^{C}]{\matr{L}}^{-1}\msubgen[\set{S}^{C},\set{S}]{\matr{L}}.
    \end{align}

\subsection{Problem Setting}
\label{sec:problem_setting}
We assume the following setup for \bs{all} results: 
\begin{itemize}
    \item A feature matrix $\matr{X}\in\mathbb{R}^{N \times d}$, where the columns are \bs{jointly Gaussian} graph signals that each follow $\mathcal{N}\left(\vect{0},\matr{\Sigma}\right)$. We assume a smooth signal model (e.g. $\matr{\Sigma} = \projbl$ \cite{sripathmanathan2024impact} or $\matr{L}^{\dagger}$ \cite{dong2016learning});
    \item Binary labels $\matr{l} \in \{-1,1\}^{N}$ generated by a known function $f:\, \mathbb{R}^{N \times d}\to \mathbb{R}^{N}$  (i.e., $\vect{l} = \text{sign}\left(f\left(\matr{X}\right)\right)$ with sign elementwise);
    \item Observations of $\matr{X}$ at a sample set $\set{S}$ corrupted by white noise of variance $\eta^{2}$ (i.e., $\msubgen[\set{S},\set{D}]{\matr{X} + \eta \cdot \matr{\epsilon}}$); 
    \item A linear reconstruction $\hat{\matr{X}}$ of the feature matrix, such as LS or FP, giving $\hat{\matr{X}} = \matr{R}_{\set{S}} \msubgen[\set{S},\set{D}]{\matr{X} + \eta \cdot \matr{\epsilon}}$.
\end{itemize}

We consider the following losses:
\begin{align}
    &\bs{\text{Classification Loss}} = {\sum_{i \in \set{V}}\prob{\text{sign}\left(f(\matr{X})_{i}\right) \neq \text{sign}(f(\hat{\matr{X}})_{i})}} \\
    &\text{Reconstruction Loss} = \expect{||{\vect{X} - \hat{\vect{X}}}||^{2}_{F}}.
\end{align}


\bs{The classification loss defined above is the sum of per-node misclassification probabilities. We refer to $f(\matr{X}) \in \mathbb{R}^{N}$ and $f(\hat{\matr{X}}) \in \mathbb{R}^{N}$ as vectorial `clean output' and `reconstructed output', respectively, and $f(\matr{X}) - f(\hat{\matr{X}})$ as the `output error'. We let $f(\matr{X})_{i}$ denote the component of $f(\matr{X})$ at node $i$. 
}



\section{Theoretical Results and Implications}
\subsection{General Results}
\label{sec:Gen_Theory}
We begin with a core lemma:

\begin{lemma}
\label{lemma:gaussian_sign}
Consider \bs{$X,Y$ real scalar random variables with} $\binom{X}{Y} \sim \mathcal{N}\left( \vect{0}, \begin{psmallmatrix} \sigma^{2} & c \\ c & \nu^{2} \end{psmallmatrix} \right)$ with $\sigma, \nu, c \in \mathbb{R}$. Let $\rho = \text{Corr}(X,Y) = \frac{c}{\sigma\nu}$. Then
\begin{equation}
     \bs{\prob{\text{sign}(X) \neq \text{sign}(Y) }} = \frac{ \text{arccos }\rho }{\pi}.
\end{equation}
    
\end{lemma}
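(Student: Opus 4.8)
The plan is to exploit the scale-invariance of $\text{sign}(\cdot)$ together with the rotational symmetry of the isotropic bivariate Gaussian, so that the probability becomes the measurement of an angle. First I would record the harmless degeneracies: the statement presupposes $\sigma,\nu>0$ (otherwise a variable is a.s.\ $0$ and its sign is undefined), and positive semidefiniteness of the covariance forces $|\rho|\le 1$, so $\arccos\rho$ is well-defined; when $\rho=1$ we have $Y=(\nu/\sigma)X$ a.s., the signs never disagree, and $\arccos(1)/\pi=0$, while $\rho=-1$ gives $Y=-(\nu/\sigma)X$ a.s., the signs always disagree, and $\arccos(-1)/\pi=1$. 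So I may assume $\rho\in(-1,1)$. Since $\text{sign}(aX)=\text{sign}(X)$ for $a>0$, replacing $(X,Y)$ by $(X/\sigma,Y/\nu)$ changes neither side, so I may take $\sigma=\nu=1$, i.e.\ $\binom{X}{Y}\sim\mathcal{N}\!\left(\vect 0,\begin{psmallmatrix}1&\rho\\\rho&1\end{psmallmatrix}\right)$.

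Next I would realise this pair as projections of an isotropic vector: let $\vect Z\sim\mathcal N(\vect 0,\matr I_2)$ and choose unit vectors $\vect u,\vect v\in\mathbb R^2$ with $\langle\vect u,\vect v\rangle=\rho$ (e.g.\ $\vect u=(1,0)$ and $\vect v=(\rho,\sqrt{1-\rho^2})$); then $(\langle\vect u,\vect Z\rangle,\langle\vect v,\vect Z\rangle)$ has exactly the covariance $\begin{psmallmatrix}1&\rho\\\rho&1\end{psmallmatrix}$, hence is equal in distribution to $(X,Y)$ and I may compute with it instead. Write $\theta=\arccos\rho\in(0,\pi)$ for the angle between $\vect u$ and $\vect v$. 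Because $\vect Z$ is rotationally invariant, its direction $\vect Z/\lVert\vect Z\rVert$ is uniform on the unit circle; parametrise it by $\Phi\sim\mathrm{Unif}[0,2\pi)$. The sign of $\langle\vect u,\vect Z\rangle$ is then a function of $\Phi$ alone — it is $+1$ precisely when $\Phi$ points into the open half-circle of directions making an acute angle with $\vect u$ — and likewise for $\vect v$.

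The crux is then a one-line count: these two half-circles are rotations of one another by the angle $\theta$, so the set of directions on which $\langle\vect u,\vect Z\rangle$ and $\langle\vect v,\vect Z\rangle$ have opposite signs is the symmetric difference of the two half-circles, which is a union of two disjoint arcs each of angular length $\theta$. The boundary events $\langle\vect u,\vect Z\rangle=0$ or $\langle\vect v,\vect Z\rangle=0$ have probability $0$ and may be discarded. Hence $\prob{\text{sign}(X)\neq\text{sign}(Y)}=\frac{2\theta}{2\pi}=\frac{\theta}{\pi}=\frac{\arccos\rho}{\pi}$, as claimed.

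I do not anticipate a genuine obstacle; the only point needing care is fixing the orientation of the two arcs so that their total length is $2\theta$ rather than $2(\pi-\theta)$, which is pinned down by the check $\rho=0$ (then $\theta=\pi/2$ and the answer is $1/2$) together with the two degenerate endpoints. A more computational alternative would be to invoke the classical bivariate-normal orthant identity $\prob{X>0,\,Y>0}=\tfrac14+\tfrac{1}{2\pi}\arcsin\rho$ and observe that, by the symmetry $(X,Y)\mapsto(-X,-Y)$, the disagreement probability equals $1-2\prob{X>0,\,Y>0}=\tfrac12-\tfrac1\pi\arcsin\rho=\tfrac1\pi\arccos\rho$; but since that identity itself needs essentially the same argument, the direct geometric route above is preferable.
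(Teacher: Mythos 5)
Your proof is correct, and it reaches the result by a route that is geometrically the same in spirit as the paper's but differs in the decisive computational step. The paper also normalises to unit variances and then \emph{whitens}: it defines $Z' = rX' - \sqrt{1+r^{2}}\,Y'$ with $r=\rho/\sqrt{1-\rho^{2}}$ so that $(X',Z')$ is standard bivariate normal, identifies the disagreement event $\{X'Y'<0\}$ with a union of two opposite wedges of opening angle $\theta=\arccos\rho$ in the $(X',Z')$-plane, and then evaluates the probability of that wedge by a tiling argument --- rotating $\lfloor m\pi/\theta\rfloor$ copies of the region to cover the plane, with a limiting sequence $\theta_{m}\to\theta$ to handle the case where $\pi/\theta$ is not an integer. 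You go in the opposite direction, \emph{realising} $(X,Y)$ as projections $\langle \vect{u},\vect{Z}\rangle,\langle \vect{v},\vect{Z}\rangle$ of an isotropic Gaussian onto unit vectors at angle $\theta$, and then invoke directly that the direction $\vect{Z}/\lVert\vect{Z}\rVert$ is uniform on the circle, so the disagreement event --- the symmetric difference of two half-circles, two arcs of total angular length $2\theta$ --- has probability $2\theta/2\pi$. The two pictures are dual (your projection map is essentially the inverse of the paper's whitening map), but your evaluation of the angular probability is cleaner: it replaces the paper's covering-and-limit argument with the single standard fact that the angular coordinate of an isotropic Gaussian is uniform, and it dispenses with any case distinction on whether $\pi/\theta$ is rational. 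You also treat the degenerate cases $|\rho|=1$ explicitly, which the paper leaves implicit; note only that the paper's change of variables already breaks down at $\rho=\pm1$ (where $r$ is undefined), so your separate handling of those endpoints is not merely cosmetic.
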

\begin{proof}
 Let $X' = \frac{X}{\sigma}, Y' = \frac{Y}{\nu}$, then $\binom{X'}{Y'} \sim \mathcal{N}\left( \vect{0}, \begin{psmallmatrix} 1 & \rho \\ \rho & 1 \end{psmallmatrix} \right)$. Note that $\text{sign}(X) \neq \text{sign}(Y) \iff XY < 0 \iff X'Y' < 0$; therefore $\prob{\text{sign}(X) \neq \text{sign}(Y) } = \prob{X'Y' < 0}$. \bs{Therefore we only need to prove the lemma for $X'$ and $Y'$, i.e., when $\sigma=\nu=1$ and $c=\rho$.}

\begin{figure}[H]
    \begin{minipage}[T]{0.48\columnwidth}
        \raggedright
        Let $r = \frac{\rho}{\sqrt{1-\rho^2}}$ and $Z' = rX' - \sqrt{1+r^{2}} Y'$ . 
        Then $\binom{X'}{Z'} \sim \mathcal{N}\left(\vect{0}, \matr{I}\right)$ and 
        $\prob{X'Y'<0}$
        \noindent$ = \prob{rX'^{2} < X'Z'} $ 
        \noindent$= 
        \mathbb{P}(\{rX' < Z' \text{ and } X' > 0\} \cup \{rX' > Z' \text{ and } X' < 0 \})$. 
        We depict this set as the shaded area in Figure \ref{fig:proof_fig}.
    \end{minipage}
    \hfill
    \begin{minipage}[T]{0.48\columnwidth}
        \centering
        \begin{tikzpicture}[scale=0.35, background layer/.style={fill=none}]
\def\rhovalue{0.8}
\pgfmathsetmacro{\slope}{\rhovalue/sqrt(1-\rhovalue^2)}
\def\endX{5}
\pgfmathsetmacro{\endZ}{\slope*\endX}

\draw[->] (0,-3) -- (0,3) node[left] {$Z'$};
\draw[->] (-4,0) -- (4,0) node[below right] {$X'$};

\pgfmathsetmacro{\leftX}{-3/\slope}  
\pgfmathsetmacro{\rightX}{3/\slope}  
\pgfmathsetmacro{\actualLeftX}{max(-5.5, \leftX)}
\pgfmathsetmacro{\actualRightX}{min(5.5, \rightX)}
\pgfmathsetmacro{\actualLeftZ}{\slope*\actualLeftX}
\pgfmathsetmacro{\actualRightZ}{\slope*\actualRightX}
\draw[thick] (\actualLeftX, \actualLeftZ) -- (\actualRightX, \actualRightZ) node[above right] {$Z' = \frac{\rho X'}{\sqrt{1-\rho^2}}$};

\coordinate (O) at (0,0);
\coordinate (Zaxis) at (0,1);
\coordinate (LinePoint) at (1, \slope);
\coordinate (NegZaxis) at (0,-1);
\coordinate (NegLinePoint) at (-1, -\slope);
\coordinate (Zfar) at (0,3.2);
\coordinate (linetopfar) at (3.2/\slope, 3.2);
\coordinate (NegZfar) at (0,-3.2);
\coordinate (Neglinetopfar) at (-3.2/\slope, -3.2);

    \fill[pattern=north west lines, pattern color=black!40] 
      (O) -- (NegZfar) -- (Neglinetopfar) -- cycle;

    \fill[pattern=north west lines, pattern color=black!40] 
      (O) -- (Zfar) -- (linetopfar) -- cycle;

\pic[
    draw,
    angle radius=20pt,
    angle eccentricity=0.6,
    preaction={draw=white, line width=6pt},
    "$\theta$"
] {angle = LinePoint--O--Zaxis};

\pic[
    draw,
    angle radius=20pt,
    angle eccentricity=0.6,
    preaction={draw=white, line width=6pt},
    "$\theta$"
] {angle = NegLinePoint--O--NegZaxis};

\end{tikzpicture}
        \captionof{figure}{$\{rX'^{2} < X'Z' \}$}
        \label{fig:proof_fig}
    \end{minipage}
\end{figure}
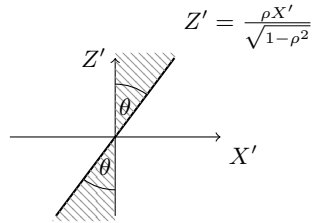

Note that $\theta = \text{arccos}(\rho)$. The distribution of $\binom{X'}{Z'}$ is rotationally invariant, so for any positive integer $m$, if $\frac{m\pi}{\theta} \in \mathbb{Z}$, we can copy and rotate the shaded area $\frac{m\pi}{\theta}$ times to cover the plane $m$ times (one covering of the plane corresponds to $\prob{X' \in \mathbb{R}, Z' \in \mathbb{R}} = 1$). Thus the shaded area is $\prob{X'Y'<0} = \frac{\theta}{\pi}$. Otherwise let $\theta_{m} = \frac{m\pi}{\left\lceil\frac{m\pi}{\theta} \right\rceil}$ and our argument applies if the shaded area were shrunk to $\theta_{m}$. Then $\theta_{m}$ tends to $\theta$ from below as $m \to \infty$ and by countable additivity of measures, $\prob{X'Y' < 0} = \lim_{m\to\infty} \frac{\theta_{m}}{\pi} = \frac{\theta}{\pi}$.
\end{proof}

We now characterise our losses for a class of linear $f$ which we interpret as a linearized GCN (see Sec \ref{sec:experimental_setup} for a specific example):
    
\begin{theorem}
\label{thm:general_class_err}
    \bs{If} the function $f$ generating the labels from the features can be written as $$f(\matr{X}) = \matr{G}\matr{X}\matr{w}$$ where $\matr{G}\in \mathbb{R}^{N \times N}$ and $\vect{w} \in \mathbb{R}^{d} \backslash \{\vect{0}\}$, then for a sample set $\set{S}$,
    \begin{align}
         &\text{Classification Loss}  = \sum_{i \in \set{V}} \frac{1}{\pi}\text{arccos}\left( \rho_{i}(\matr{G}) \right) \label{eq:gen_class_err}
        \end{align}
and if $d=1$,
        \begin{align}
        &\text{Reconstruction Loss}  =   \sum_{i \in \set{V}} \left(\sigma_{i}(\matr{I})\right)^{2} + \left(\nu_{i}(\matr{I})\right)^{2} - 2c_{i}(\matr{I}) \label{eq:gen_rec_err}
    \end{align}
    where for any matrix $\matr{M}$, we let
    \begin{align}
        \matr{C} &= \text{Cov}(\matr{X}\vect{w}) \cdot (\sqnormvec{\vect{w}})^{-1},\\
        c_{i}(\matr{M}) &={(\matr{M}\matr{R}_{\set{S}}\matrsub[S,N]{C} \matr{M}^{T})_{ii}}, \\
        (\sigma_{i}(\matr{M}))^{2} &= (\matr{M}\matr{C}\matr{M}^{T})_{ii}, \\
        (\nu_{i}(\matr{M}))^{2} &= ( \matr{M}\matr{R}_{\set{S}}\msubgen[\set{S}]{\matr{C} + \eta^{2}\matr{I}_{N}} \matr{R}_{\set{S}}^{T} \matr{M}^{T})_{ii},\\
        \rho_{i}\left(\matr{M}\right) &= \frac{c_{i}(\matr{M})}{\sigma_{i}(\matr{M})\nu_{i}(\matr{M})}.
    \end{align}
\end{theorem}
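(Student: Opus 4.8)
The plan is to reduce the classification-loss identity to a per-node application of Lemma~\ref{lemma:gaussian_sign}, and to obtain the reconstruction-loss identity as a by-product of the same covariance computation. First I would set $\vect{y} = \matr{X}\vect{w}$ and $\vect{z} = \matr{\epsilon}\vect{w}$, both in $\mathbb{R}^{N}$. Since the observation noise is independent of $\matr{X}$ and (being white Gaussian of variance $\eta^{2}$) $\matr{\epsilon}$ has unit-variance i.i.d.\ entries, $(\vect{y},\vect{z})$ is jointly Gaussian with mean $\vect{0}$, with $\vect{y}$ and $\vect{z}$ independent, $\text{Cov}(\vect{y}) = \text{Cov}(\matr{X}\vect{w}) = \sqnormvec{\vect{w}}\matr{C}$ by definition of $\matr{C}$, and $\text{Cov}(\vect{z}) = \sqnormvec{\vect{w}}\matr{I}_{N}$. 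Because $\hat{\matr{X}}\vect{w} = \matr{R}_{\set{S}}\bigl(\vectsub[S]{y} + \eta\,\vectsub[S]{z}\bigr)$, for each node $i$ both $f(\matr{X})_{i} = (\matr{G}\vect{y})_{i}$ and $f(\hat{\matr{X}})_{i} = \bigl(\matr{G}\matr{R}_{\set{S}}(\vectsub[S]{y}+\eta\vectsub[S]{z})\bigr)_{i}$ are linear functionals of $(\vect{y},\vect{z})$, so $\bigl(f(\matr{X})_{i},\,f(\hat{\matr{X}})_{i}\bigr)$ is a centred bivariate Gaussian.

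The heart of the argument is then to compute its $2\times 2$ covariance and extract the correlation. Using independence of $\vect{y}$ and $\vect{z}$, together with $\text{Cov}(\vect{y},\vectsub[S]{y}) = \sqnormvec{\vect{w}}\,\matrsub[N,S]{C}$ and $\text{Cov}(\vectsub[S]{y}+\eta\vectsub[S]{z}) = \sqnormvec{\vect{w}}\bigl(\matrsub[S,S]{C}+\eta^{2}\matr{I}_{|\set{S}|}\bigr)$ (the $\set{S}$-principal submatrix of $\matr{C}+\eta^{2}\matr{I}_{N}$), I would obtain $\text{Var}(f(\matr{X})_{i}) = \sqnormvec{\vect{w}}\,(\sigma_{i}(\matr{G}))^{2}$, $\text{Var}(f(\hat{\matr{X}})_{i}) = \sqnormvec{\vect{w}}\,(\nu_{i}(\matr{G}))^{2}$, and $\text{Cov}(f(\matr{X})_{i},f(\hat{\matr{X}})_{i}) = \sqnormvec{\vect{w}}\,\bigl(\matr{G}\,\matrsub[N,S]{C}\,\matr{R}_{\set{S}}^{T}\matr{G}^{T}\bigr)_{ii}$. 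The last expression should then be rewritten as $\sqnormvec{\vect{w}}\,c_{i}(\matr{G})$ by noting that $\matr{C}$ is symmetric, so $\matr{G}\,\matrsub[N,S]{C}\,\matr{R}_{\set{S}}^{T}\matr{G}^{T}$ is the transpose of $\matr{G}\matr{R}_{\set{S}}\matrsub[S,N]{C}\matr{G}^{T}$ and therefore has the same diagonal. Since $\vect{w}\ne\vect{0}$, the common factor $\sqnormvec{\vect{w}}$ cancels in the correlation coefficient of the pair, which is then exactly $\rho_{i}(\matr{G})$.

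To finish the classification part I would invoke Lemma~\ref{lemma:gaussian_sign} (after checking the two variances are positive, so that $\rho_{i}(\matr{G})$ is well defined and the lemma applies; degenerate coordinates can be handled separately) to get $\prob{\text{sign}(f(\matr{X})_{i}) \ne \text{sign}(f(\hat{\matr{X}})_{i})} = \tfrac{1}{\pi}\arccos\rho_{i}(\matr{G})$, and then sum over $i\in\set{V}$ to obtain \eqref{eq:gen_class_err}. For \eqref{eq:gen_rec_err} I would specialise to $d=1$, where $\matr{X}$ is a single graph signal $\vect{X}$, $\vect{w}$ a nonzero scalar, and hence $\matr{C} = \text{Cov}(\matr{X}\vect{w})/\sqnormvec{\vect{w}} = \text{Cov}(\vect{X}) = \matr{\Sigma}$, so the noisy observations on $\set{S}$ have covariance $\matrsub[S,S]{C}+\eta^{2}\matr{I}_{|\set{S}|}$. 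Repeating the covariance computation above with $\matr{G}$ replaced by $\matr{I}$ identifies $\text{Var}(X_{i}) = (\sigma_{i}(\matr{I}))^{2}$, $\text{Var}(\hat{X}_{i}) = (\nu_{i}(\matr{I}))^{2}$, $\text{Cov}(X_{i},\hat{X}_{i}) = c_{i}(\matr{I})$; since all means vanish, $\expect{\sqfrob{\vect{X}-\hat{\vect{X}}}} = \sum_{i}\bigl(\text{Var}(X_{i}) - 2\,\text{Cov}(X_{i},\hat{X}_{i}) + \text{Var}(\hat{X}_{i})\bigr)$, which is precisely the right-hand side of \eqref{eq:gen_rec_err}.

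The hard part will be the covariance bookkeeping in the middle step — keeping the $\sqnormvec{\vect{w}}$ factors and the matrix shapes ($N\times N$, $N\times|\set{S}|$, $|\set{S}|\times|\set{S}|$) straight, and in particular recognising that the cross-covariance that arises naturally, $\bigl(\matr{G}\,\matrsub[N,S]{C}\,\matr{R}_{\set{S}}^{T}\matr{G}^{T}\bigr)_{ii}$, coincides with the stated form $c_{i}(\matr{G}) = \bigl(\matr{G}\matr{R}_{\set{S}}\matrsub[S,N]{C}\matr{G}^{T}\bigr)_{ii}$ via symmetry of $\matr{C}$ and transpose-invariance of the diagonal. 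A secondary concern is verifying nondegeneracy of the Gaussian pair (and the tacit assumption that $\matr{\epsilon}$ is Gaussian, not merely white) before applying the lemma.
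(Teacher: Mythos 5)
Your proposal is correct and follows essentially the same route as the paper's proof: compute the per-node joint covariance of $f(\matr{X})_{i}$ and $f(\hat{\matr{X}})_{i}$ using independence of the noise and zero means, identify the correlation as $\rho_{i}(\matr{G})$, apply Lemma~\ref{lemma:gaussian_sign} and sum over nodes, then expand the squared Frobenius norm into variance and covariance terms for the $d=1$ reconstruction identity. Your added care about the transpose-invariance of the diagonal, nondegeneracy of the variances, and Gaussianity of $\matr{\epsilon}$ are minor refinements the paper leaves implicit, not a different argument.
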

\begin{proof}
    As $\matr{X}$ is independent to the noise and $\expect{\matr{X}}=0$,
    $\text{Cov}(f(\hat{\vect{X}})_{i}, f({\vect{X}})_{i}) = \expect{f(\hat{\vect{X}})f({\vect{X}})^{T}}_{ii} = \matr{G}\matr{R}_{\set{S}}\matrsub[S,N]{I}\expect{(\matr{X} + \eta \cdot \vect{\epsilon})\vect{w}\vect{w}^{T}\matr{X}^{T}}\matr{G}^{T}$ = $\matr{G}\matr{R}_{\set{S}}\msubgen[\set{S},\set{N}]{\text{Cov}(\matr{X}\vect{w})}\matr{G}^{T} = \sqnormvec{\vect{w}}c_{i}(\matr{G})$. Similarly, $\text{Var}(f(\matr{X})_{i}) = \sqnormvec{\vect{w}}(\sigma_{i}(\matr{G}))^{2}$ and  $\text{Var}(f(\hat{\matr{X}})_{i}) = \sqnormvec{\vect{w}}(\nu_{i}(\matr{G}))^{2}$ giving $\rho_{i}(\matr{G}) = \text{Corr}(f(\hat{\vect{X}})_{i}, f({\vect{X}})_{i})$. For classification loss, as $(f(\hat{\vect{X}})_{i}$ and $f({\vect{X}})_{i})$ are linear transformations of $\matr{X}$, they are jointly Gaussian and we can apply Lemma \ref{lemma:gaussian_sign} to get that the misclassification probability at node $i$ is $\frac{\text{arccos}(\rho_{i}(\matr{G}))}{\pi}$; sum across all nodes to get the total loss. For the Reconstruction loss, $\expect{||\matr{X} - \hat{\matr{X}}||^{2}_{F}} =  \expect{\text{tr}(\matr{X}\matr{X}^{T})} + \expect{\text{tr}(\hat{\matr{X}}\hat{\matr{X}}^{T})} - 2\expect{\text{tr}(\hat{\matr{X}}\matr{X}^{T})}  = \sum_{i \in \set{V}}d \text{Var}(\matr{X}_{i}) + d\text{Var}(\hat{\matr{X}}_{i}) - 2d\text{Cov}(\hat{\matr{X}}_{i}, \matr{X}_{i}) $. As $\expect{\matr{X}} = \expect{\hat{\matr{X}}} = 0$, if $d=1$ this expands to (\ref{eq:gen_rec_err}).
\end{proof}


\bs{
If we let $\matr{G}$ be a polynomial $p$ of $\tilde{\matr{A}}_{\gamma}$, then $\vect{l} = \text{sign}(p(\tilde{\matr{A}}_{\gamma})\matr{X}\matr{w})$, which corresponds to a multi-layer SGC \cite{wu2019simplifying} or a single layer polynomial-based GNN\footnote{We assume that all feature dimensions share the same polynomial filter.} like Chebnet \cite{defferrard2016convolutional}, FavardGNN \cite{guo2023Favard}, JacobiNet \cite{guo2023manipulating} or BernNet \cite{he2021bernnet}. Then the classification loss in Theorem \ref{thm:general_class_err} is the difference in output when applying the GNN to the full data $\matr{X}$ versus to the reconstructed data $\hat{\matr{X}}$. 
Note that in the case where the columns of $\matr{X}$ are independent, $\matr{C} = \matr{\Sigma}$, and therefore for SGC the classification loss does not depend on the weights, only on the number of layers.
}


\bs{ To interpret Theorem \ref{thm:general_class_err}, we consider any $\matr{G}$ and LS reconstruction of a noise-free $k$-bandlimited signal; specifically $\matr{\Sigma} = \projbl$. In this case, $\sqrt{c_{i}(\matr{G})} = \nu_{i}(\matr{G})$ and the {\text{classification loss}} is $\sum_{i \in \set{V}} \frac{1}{\pi}\arccos{\left(\sqrt{\frac{c_{i}}{\sigma_{i}}}\right)}$ where $0 \leq \frac{c_{i}}{\sigma_{i}} \leq 1$. Thus, we expect the behaviour of classification \bs{loss} as it changes with sample size to look similar to $x \to \arccos{\left(\sqrt{x}\right)}$.

}

\subsection{Relationship between Reconstruction and Classification Loss}
\bs{
We now relate the reconstruction and classification losses via the error in reconstructing the output of $f$. All statements in the following Corollary hold for any integer value of $d \geq 1$.

\begin{corollary}
\label{corr:relationship}
    Assume the setting and definitions of Theorem \ref{thm:general_class_err}. 
    Define the normalized output error at node $i$ as 
    \begin{equation}
    \text{Error}_{out, i} = \left(\sqnormvec{\vect{w}}\right)^{-1}{\expect{||(f(\matr{X}))_i - (f(\hat{\matr{X}}))_i||^{2}_{2}}}. 
    \end{equation}
\bs{Let $||\matr{G}||$ be the spectral norm of $\matr{G}$, then }
\begin{equation}
\label{eq:bound_final_layer}
\sum_{i \in \set{V}} \text{Error}_{out,i} \leq ||\matr{G}||^{2} \cdot \text{Reconstruction Loss}
\end{equation}
and for every node $i$, the following is always a valid triangle.
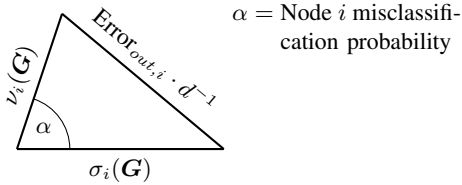
\begin{figure}[H]
    \centering
    \begin{tikzpicture}[scale=0.5]
    \def\slope{3}
    \def\xA{5.5}
    \def\yA{0}
    \def\xB{1.2}
    \def\yB{\xB * \slope}
    \pgfmathsetmacro{\etaValue}{atan(\slope)}
    
    \coordinate (O) at (0,0);
    \coordinate (A) at (\xA,\yA);
    \coordinate (B) at (\xB,\yB);
    
    \pgfmathsetmacro{\otherRot}{atan2(\yA - \yB, \xA - \xB}

    
    \draw[thick] (O)--(A)  node[midway, below] {$\sigma_{i}(\matr{G})$}; 
    
    \draw[thick] (O)--(B)  node[midway, above, rotate=\etaValue] {$\nu_{i}(\matr{G})$}; 
    
    \draw[thick] (A)--(B)  node[midway, above, rotate=\otherRot, align=center] {
    $\text{Error}_{out,i} \cdot d^{-1} $
    };

\pic[
    draw,
    angle radius=20pt,
    angle eccentricity=1,
    preaction={draw=white, line width=6pt},
] {angle = A--O--B};
\node[right, xshift=4pt, yshift=8pt, text width=50pt] at (O) {$\alpha$};

\node[below, right] at (\xA,\yB-0.4) {$\alpha = \parbox[t][][t]{2.4cm}{Node $i$ misclassification probability }$};

\end{tikzpicture}
    \caption{Relationship between \bs{misclassification probability} \iffalse Classification \bs{Loss} \fi (angle) and \bs{normalized output error} at node $i$}
    \label{fig:class_rec_rel_fig}
\end{figure}

\end{corollary}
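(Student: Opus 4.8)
The plan is to prove the two assertions separately, both by recycling the mean-zero second-moment identities established inside the proof of Theorem~\ref{thm:general_class_err}: $\text{Var}(f(\matr{X})_i) = \sqnormvec{\vect{w}}(\sigma_i(\matr{G}))^2$, $\text{Var}(f(\hat{\matr{X}})_i) = \sqnormvec{\vect{w}}(\nu_i(\matr{G}))^2$, and $\text{Cov}(f(\matr{X})_i, f(\hat{\matr{X}})_i) = \sqnormvec{\vect{w}}c_i(\matr{G})$, with correlation $\rho_i(\matr{G})$.

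For the triangle, I would first expand $\text{Error}_{out,i} = (\sqnormvec{\vect{w}})^{-1}\expect{(f(\matr{X})_i - f(\hat{\matr{X}})_i)^2}$ into two variances minus twice the covariance, obtaining $\text{Error}_{out,i} = (\sigma_i(\matr{G}))^2 + (\nu_i(\matr{G}))^2 - 2c_i(\matr{G})$; writing $c_i(\matr{G}) = \sigma_i(\matr{G})\nu_i(\matr{G})\rho_i(\matr{G})$ recognizes this as the law of cosines for a triangle with side lengths $\sigma_i(\matr{G})$, $\nu_i(\matr{G})$, $\sqrt{\text{Error}_{out,i}}$ and included angle $\arccos\rho_i(\matr{G})$. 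To check that such a triangle genuinely exists and matches Figure~\ref{fig:class_rec_rel_fig}, I would view $f(\matr{X})_i/||\vect{w}||$ and $f(\hat{\matr{X}})_i/||\vect{w}||$ as elements of the Hilbert space $L^2$ with inner product $\langle A,B\rangle = \expect{AB}$: the origin and these two points are the vertices of a (possibly degenerate) Euclidean triangle because the triangle inequality holds in any inner-product space; its side lengths are $\sigma_i(\matr{G})$, $\nu_i(\matr{G})$, $\sqrt{\text{Error}_{out,i}}$ by the identities above; and the angle at the origin is the arccosine of the correlation between the two outputs, i.e.\ $\arccos\rho_i(\matr{G})$, which by Lemma~\ref{lemma:gaussian_sign} is exactly $\pi$ times the node-$i$ misclassification probability.

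For the bound \eqref{eq:bound_final_layer}, I would write $f(\matr{X})_i - f(\hat{\matr{X}})_i = \vect{e}_i^{T}\matr{G}(\matr{X}-\hat{\matr{X}})\vect{w}$ with $\vect{e}_i$ the $i$-th standard basis vector, so $\sum_{i\in\set{V}}\text{Error}_{out,i} = (\sqnormvec{\vect{w}})^{-1}\expect{\sqnormvec{\matr{G}(\matr{X}-\hat{\matr{X}})\vect{w}}}$, then bound $||\matr{G}(\matr{X}-\hat{\matr{X}})\vect{w}||_2 \le ||\matr{G}||\,||\matr{X}-\hat{\matr{X}}||_F\,||\vect{w}||_2$ (the spectral norm is dominated by the Frobenius norm, equivalently Cauchy--Schwarz applied row by row), square, and take expectations to get $\sum_{i\in\set{V}}\text{Error}_{out,i} \le ||\matr{G}||^2\expect{\sqfrob{\matr{X}-\hat{\matr{X}}}} = ||\matr{G}||^2\cdot\text{Reconstruction Loss}$.

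I do not anticipate a genuine obstacle -- each part is only a few lines once the viewpoint is fixed. The points that need care are (i) keeping the $||\vect{w}||$ normalization consistent between the two losses and observing that the reconstruction loss carries a factor $d$ that $\text{Error}_{out,i}$ does not, so the bound is slack when $d>1$; and (ii) stating precisely that ``always a valid triangle'' means that the $L^2$ triangle inequality never fails, degenerate cases (e.g.\ $\rho_i(\matr{G}) = \pm1$) included. The one conceptual step worth spelling out is that the misclassification probability of Lemma~\ref{lemma:gaussian_sign} is, up to the factor $\pi$, literally the $L^2$-angle between the clean and reconstructed output at node $i$ -- this is what makes the reconstruction--classification comparison an exact planar picture rather than merely an inequality.
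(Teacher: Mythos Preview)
Your proposal is correct and follows essentially the same route as the paper: the spectral-norm / Frobenius-norm chain for \eqref{eq:bound_final_layer}, and the variance--covariance expansion plus law of cosines for the triangle. Your $L^2$ inner-product embedding of $f(\matr{X})_i/\|\vect{w}\|$ and $f(\hat{\matr{X}})_i/\|\vect{w}\|$ is a cleaner way to certify that the triangle is genuinely valid than the paper's bare appeal to the cosine rule, and your expansion $\text{Error}_{out,i}=\sigma_i(\matr{G})^2+\nu_i(\matr{G})^2-2c_i(\matr{G})$ (so that the third side is $\sqrt{\text{Error}_{out,i}}$) is the version consistent with the second-moment identities of Theorem~\ref{thm:general_class_err}.
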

\begin{proof}
    We first prove \eqref{eq:bound_final_layer}. 
    By linearity of $f$, $(f(\matr{X}))_i - (f(\hat{\matr{X}}))_i = (\matr{G}(\matr{X} - \hat{\matr{X}})\vect{w})_{i}$. Then  $\sum_{i \in \set{V}} ||f(\matr{X})_{i} - f(\hat{\matr{X}})_{i} ||_{2}^{2} = ||\matr{G}(\matr{X} - \hat{\matr{X}})\vect{w}||_{2}^{2} \leq ||\matr{G}||^{2}\cdot ||(\matr{X} - \hat{\matr{X}})\vect{w}||_{2}^{2} \leq ||\matr{G}||^{2} \cdot (\matr{X} - \hat{\matr{X}})||_{F}^{2} \cdot ||\vect{w}||_{2}^{2}$ where the last inequality is because $||\cdot||_{2} = ||\cdot||_{F}$ for vectors and from sub-multiplicativity of $||\cdot||_{F}$. Taking expectations on both sides gives $ \sqnormvec{\vect{w}} \cdot \sum_{i \in \set{V}} \text{Error}_{out,i} \leq ||\matr{G}||^{2} \cdot  \text{Reconstruction Loss} \cdot ||\vect{w}||_{2}^{2}$. 

    Secondly, we prove Fig. \ref{fig:class_rec_rel_fig} forms a valid triangle. Through the same expansion and independence arguments as in the proof of Theorem \ref{thm:general_class_err}, $\text{Error}_{out,i} = d\cdot \left((\sigma_{i}(\matr{G}))^{2} + (\nu_{i}(\matr{G}))^{2} - 2\rho_{i}(\matr{G}) \right)$. As $\rho_{i}(\matr{G})$ is the probability of misclassification of node $i$ (see the proof of Theorem \ref{thm:general_class_err}), the sides and angle of Fig. \ref{fig:class_rec_rel_fig} obey the law of cosines and thus form a valid triangle.
\end{proof}
}

\bs{For SGCs, $||\matr{G}|| \leq 1$ (as $||\tilde{\matr{A}}|| \leq 1$ by Gershgorin's Disc Theorem \cite{horn-and-johnson}, so $\forall \gamma\geq 0:||\tilde{\matr{A}}_{\gamma}|| \leq 1$ by \cite[Theorem 1]{wu2019simplifying} and therefore $||\tilde{\matr{A}}_{\gamma}^{r}|| \leq 1$ for all $r \geq 1$). Therefore, by \eqref{eq:bound_final_layer}, the sum of the per-node normalized output errors (which are directly related to the per-node misclassification probabilities, as shown in Fig. \ref{fig:class_rec_rel_fig}) is upper bounded by the reconstruction loss.}

While inequality \eqref{eq:bound_final_layer} suggests optimising reconstruction loss to minimise $\sum_{i \in \set{V}} \text{Error}_{out,i}$, the non-linear relationship between $\text{Error}_{out,i}$ and the classification loss at node $i$ make this sum a poor proxy for total classification error. We discuss experimental results around optimising reconstruction loss as a proxy for classification loss in Section \ref{sec:optimal_sampling_results}. 

\bs{
\subsection{Sampling for Classification}
\label{sec:optimal_sampling_design}
We use Theorem \ref{thm:general_class_err} to design a sampling scheme specifically for classification, minimising mean classification loss rather than the common mean reconstruction loss objective \cite{wang2018optimal,wang2019low,sripathmanathan2024impact, mfn}. We propose greedily choosing $\set{S}$ to optimize \eqref{eq:gen_class_err} from Theorem \ref{thm:general_class_err}. 
}

\section{Empirical Results}
\subsection{Experimental Setup}
\label{sec:experimental_setup}

\subsubsection{Synthetic Graph Generation}
We generate 32 graphs with $N=500$ nodes from each of the following models:
\begin{itemize}
    \item Barab\'asi-Albert (BA) with a preferential attachment to 3 vertices at each step of construction;
    \item Stochastic Blockmodel (SBM) with intra- and inter- cluster edge probabilities of 0.7 and 0.1, respectively.
\end{itemize}

\subsubsection{Synthetic Signal Generation}
\bs{We draw features $\matr{X}$ as described in Section \ref{sec:problem_setting} with $\matr{\Sigma} = \projbl$, i.i.d. columns, bandwidth $k = \frac{N}{10}$ and noise variance $\eta^{2}=10^{-3}$ so the SNR is $20dB$, like in \cite{bai2020fast}. }

\subsubsection{Choice of $f$}
\bs{For each graph, we construct $f$ as a linearized GCN with randomly initialized weights and added self-loops ($\gamma=1)$. We choose our architecture to mimic the strong baselines in \cite{luo2024classic}, but keep to a single convolutional layer to minimise oversmoothing.
Specifically, we set
\begin{equation}
 f(\matr{X}) = \tilde{\matr{A}}_{1}\matr{X}\left(\matr{W}_{1}\matr{W}_{2} \right)
\end{equation}
 where $\matr{W}_{1} \in \mathbb{R}^{64 \times 32}$, $\matr{W}_{2} \in \mathbb{R}^{32 \times 1}$ are both populated via Glorot initialization \cite{glorot2010understanding}.}

\subsubsection{Real-world Datasets}
We consider an FMRI dataset with nodes corresponding to brain regions and 292 graph signals corresponding to blood oxygen levels \cite{zhi2023gaussian, sripathmanathan2024impact}, subsampling a connected graph with $N=367$ nodes using neighbourhood sampling \cite{hamilton2017inductive}. We do not bandlimit the \bs{given} signals. We construct a classification task by subtracting the mean from all signals and taking the sign as node class labels, representing high or low blood oxygen levels.


\subsubsection{Sample Set Selection}
We consider three sampling methods:
\begin{itemize}
    \item Random sampling;
    \item Greedy sampling optimizing classification \bs{loss} \eqref{eq:gen_class_err}, which is described in Sec \ref{sec:optimal_sampling_design};
    \item Greedy sampling optimizing reconstruction \bs{loss} \eqref{eq:gen_rec_err}. For LS, this is the same as optimizing for the frequently used A-optimal objective for reconstruction \cite{wang2018optimal, wang2019low, sripathmanathan2024impact, mfn}.
\end{itemize}



\bs{To sample on real world data, we need to compute losses for our sampling to optimize. This requires making assumptions about the noise level and distribution of the features. For the FMRI dataset, we assume the features are i.i.d. $\mathcal{N}(\vect{0},\projbl)$ with $k = \lfloor \frac{N}{10}\rfloor$ and that the SNR is $20dB$ (i.e., $\eta^{2}=10^{-3}$). }

\setlength{\belowcaptionskip}{-3pt}
\captionsetup[subfigure]{aboveskip=-1pt,belowskip=-2.5pt}
\begin{figure*}%
    \centering
    \begin{subfigure}[T]{0.5\columnwidth}
    \resizebox{\width}{0.62\columnwidth}{
    \includegraphics[width=\columnwidth]{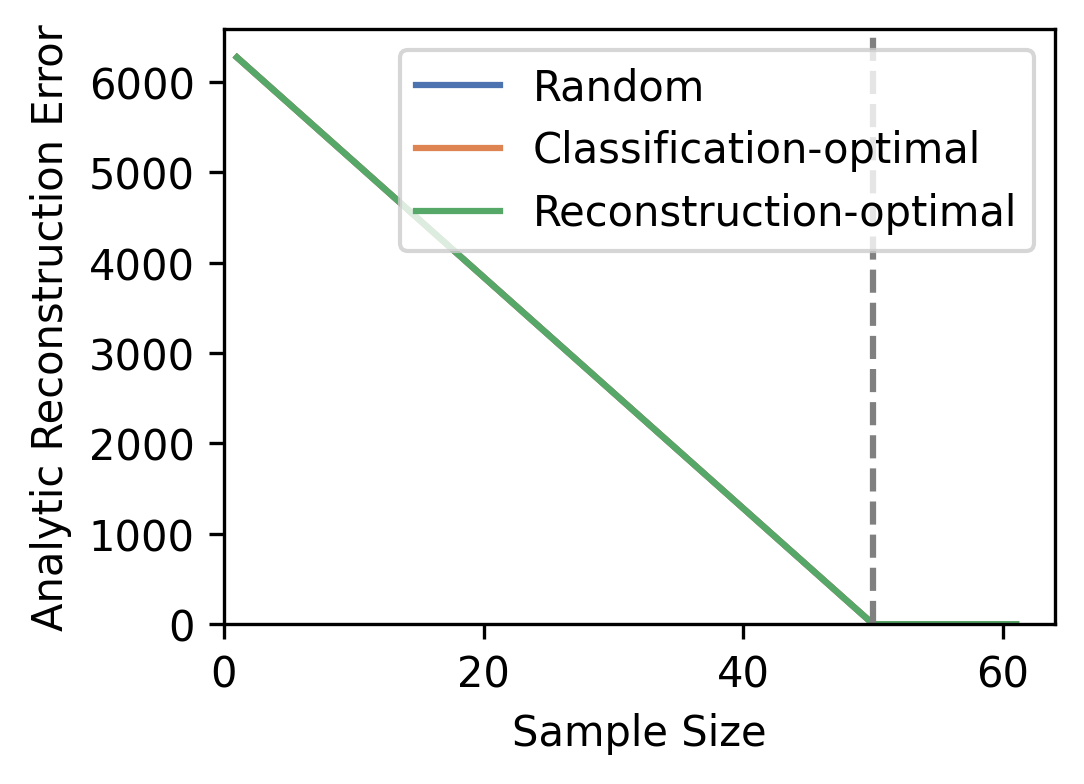}}
    \caption{Reconstruction Loss}%
    \label{subsubfig:BA_noiseless_sgc_LS_rec}
    \end{subfigure}
    \begin{subfigure}[T]{0.5\columnwidth}
    \resizebox{\width}{0.62\columnwidth}{
    \includegraphics[width=\columnwidth]{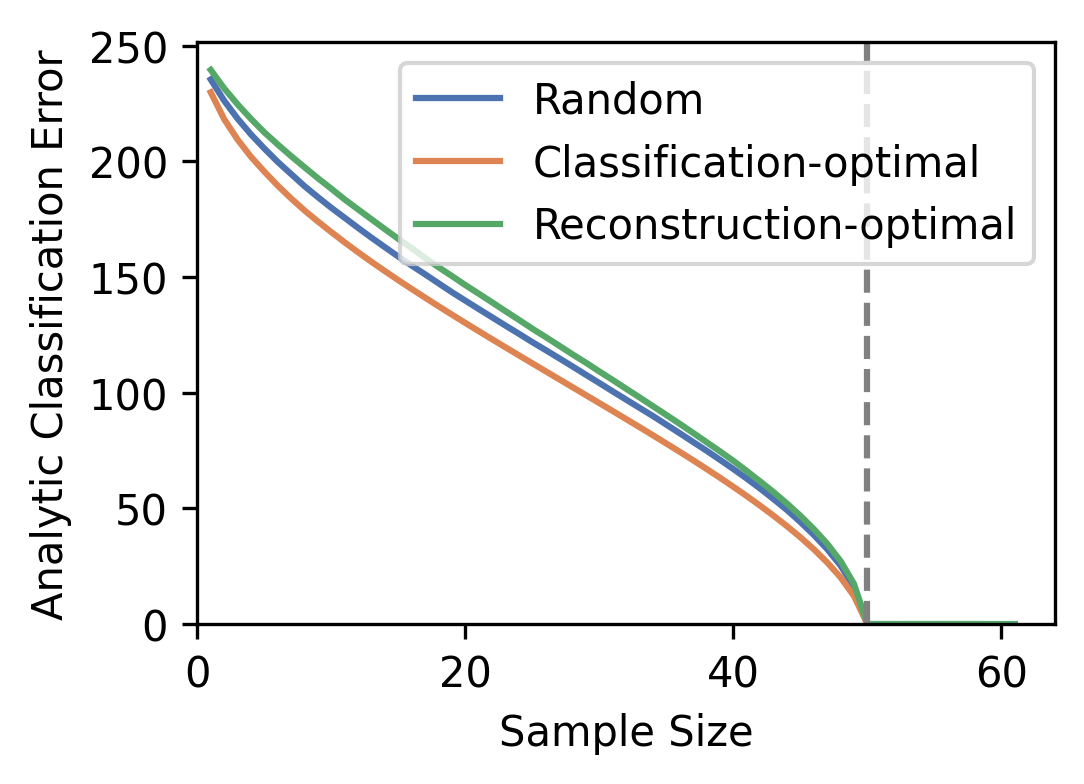}}
    \caption{Class. Loss (Analytic)}
    \label{subsubfig:BA_noiseless_sgc_LS_class_anal}
    \end{subfigure}%
    \begin{subfigure}[T]{0.5\columnwidth}
    \resizebox{\width}{0.62\columnwidth}{
    \includegraphics[width=\columnwidth]{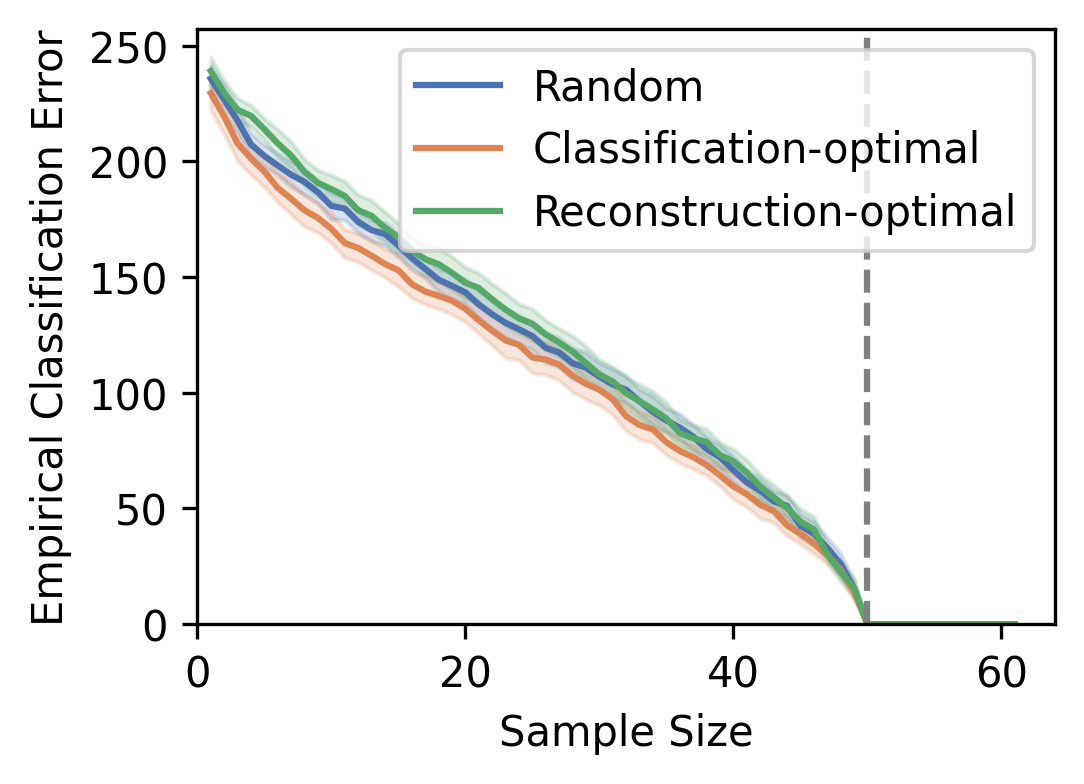}}
    \caption{Class. Loss (Empirical)}%
    \label{subsubfig:BA_noiseless_sgc_LS_class_emp}
    \end{subfigure}%
    \caption{LS Reconstruction, BA graph model, noiseless}
    \vspace{-0.19cm}
\label{subfig:BA_noiseless_sgc_LS}
\end{figure*}

\begin{figure*}%
    \centering
    \begin{subfigure}[T]{0.5\columnwidth}
    \resizebox{\width}{0.62\columnwidth}{
    \includegraphics[width=\columnwidth]{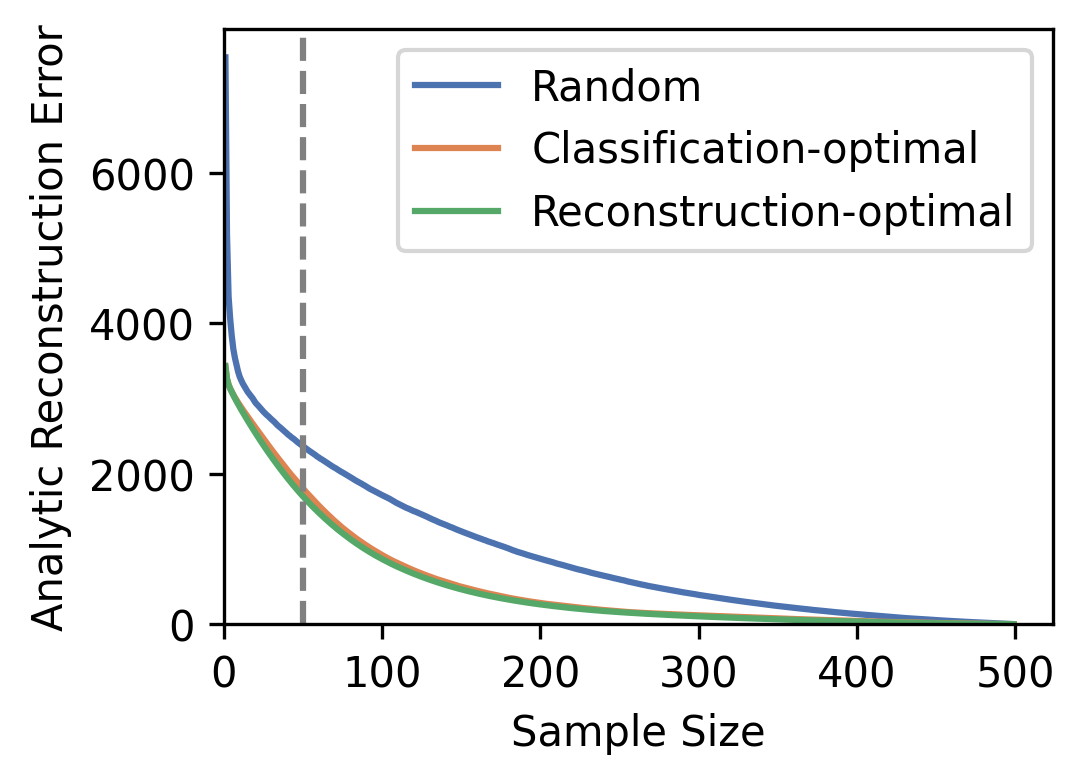}}
    \caption{Reconstruction Loss}%
    \label{subsubfig:BA_noiseless_sgc_feat_prop_rec}
    \end{subfigure}
    \begin{subfigure}[T]{0.5\columnwidth}
    \resizebox{\width}{0.62\columnwidth}{
    \includegraphics[width=\columnwidth]{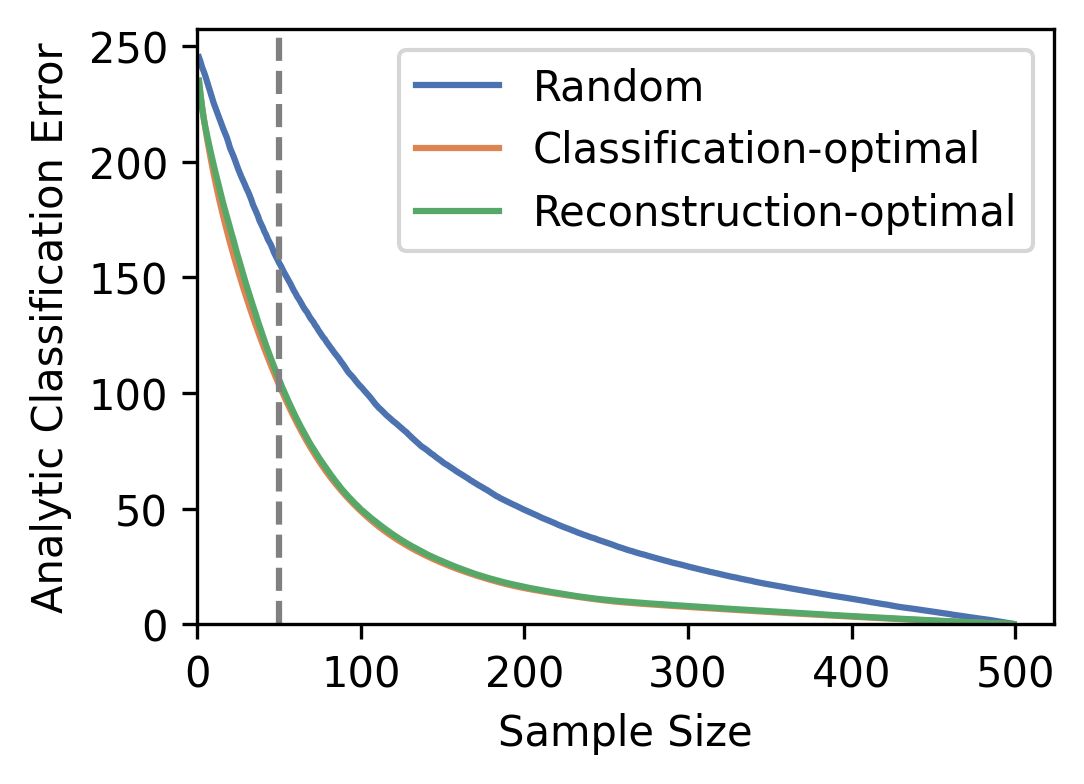}}
    \caption{Class. Loss (Analytic)}
    \label{subsubfig:BA_noiseless_sgc_feat_prop_class_anal}
    \end{subfigure}%
    \begin{subfigure}[T]{0.5\columnwidth}
    \resizebox{\width}{0.62\columnwidth}{
    \includegraphics[width=\columnwidth]{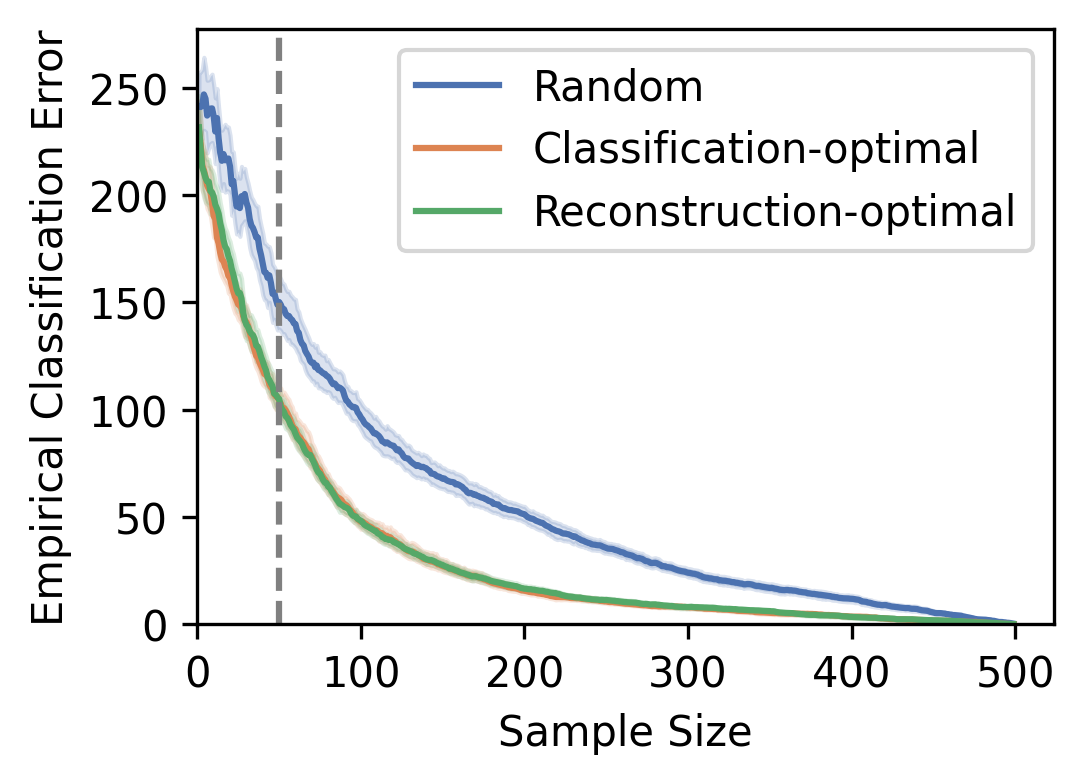}}
    \caption{Class. Loss (Empirical)}%
    \label{subsubfig:BA_noiseless_sgc_feat_prop_class_emp}
    \end{subfigure}%
    \caption{FP Reconstruction, BA graph model, noiseless}
    \vspace{-0.19cm}
\label{subfig:BA_noiseless_sgc_feat_prop}
\end{figure*}

\begin{figure*}%

    \centering
    \begin{subfigure}[T]{0.5\columnwidth}
    \resizebox{\width}{0.62\columnwidth}{
    \includegraphics[width=\columnwidth]{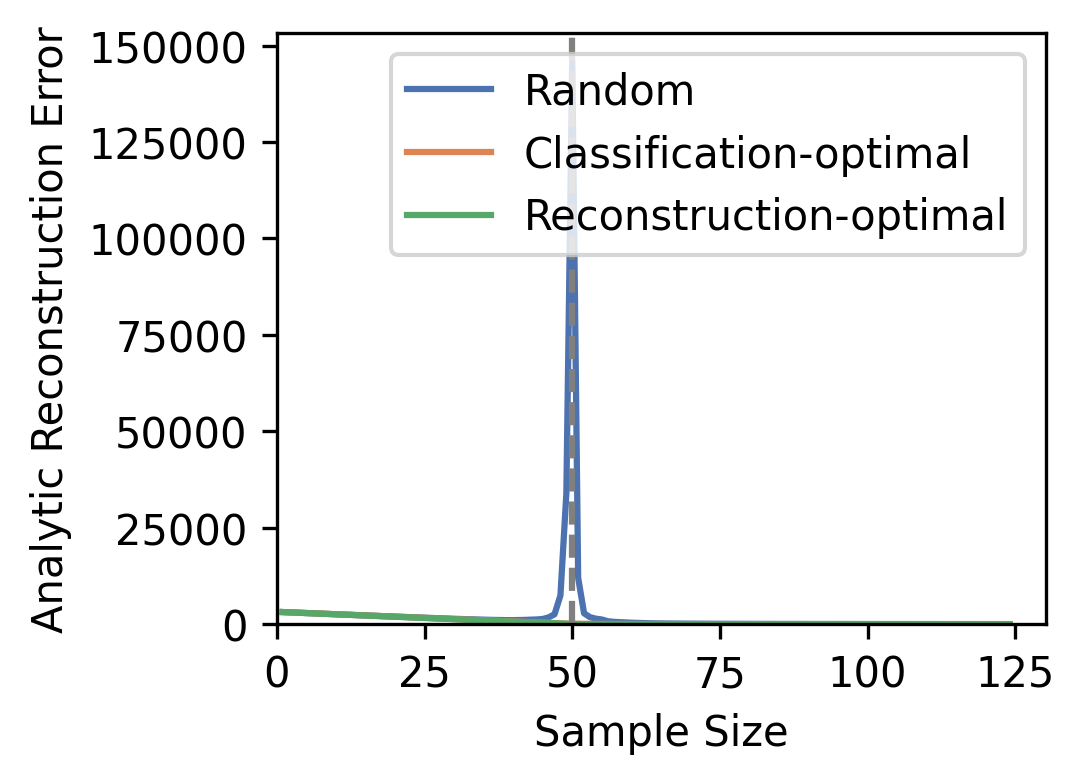}}
    \caption{Reconstruction Loss}%
    \label{subsubfig:BA_noisy_sgc_LS_rec}
    \end{subfigure}
    \begin{subfigure}[T]{0.5\columnwidth}
    \resizebox{\width}{0.62\columnwidth}{
    \includegraphics[width=\columnwidth]{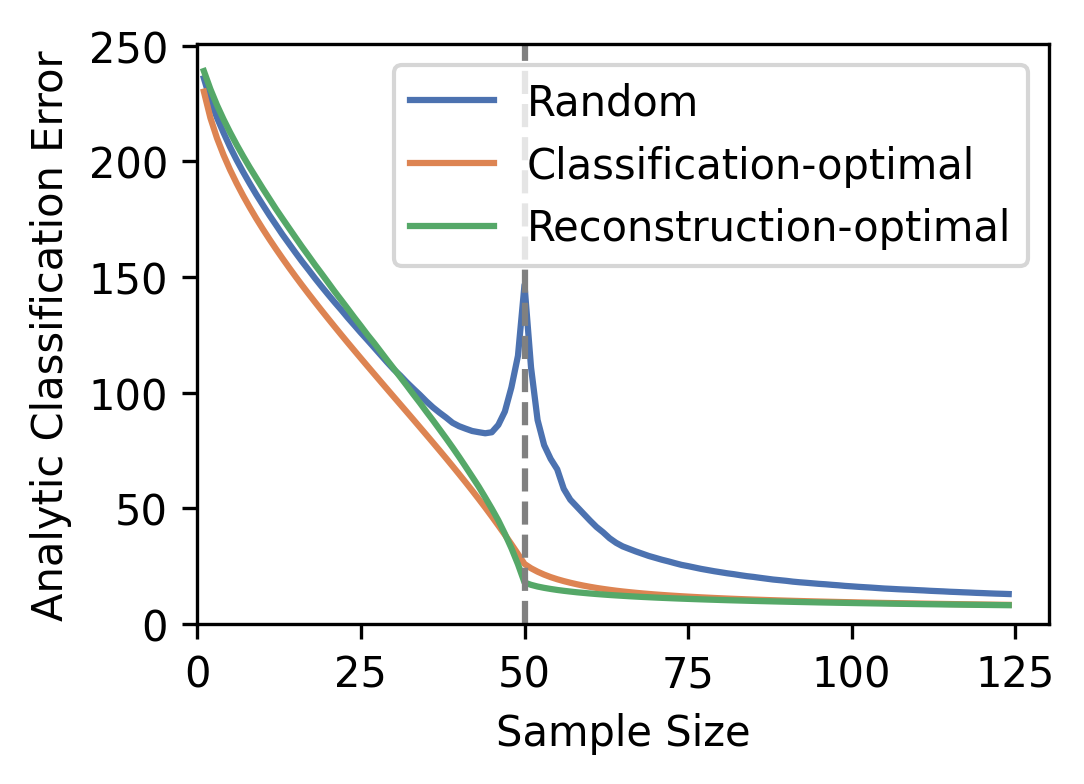}}
    \caption{Class. Loss (Analytic)}
    \label{subsubfig:BA_noisy_sgc_LS_class_anal}
    \end{subfigure}%
    \begin{subfigure}[T]{0.5\columnwidth}
    \resizebox{\width}{0.62\columnwidth}{
    \includegraphics[width=\columnwidth]{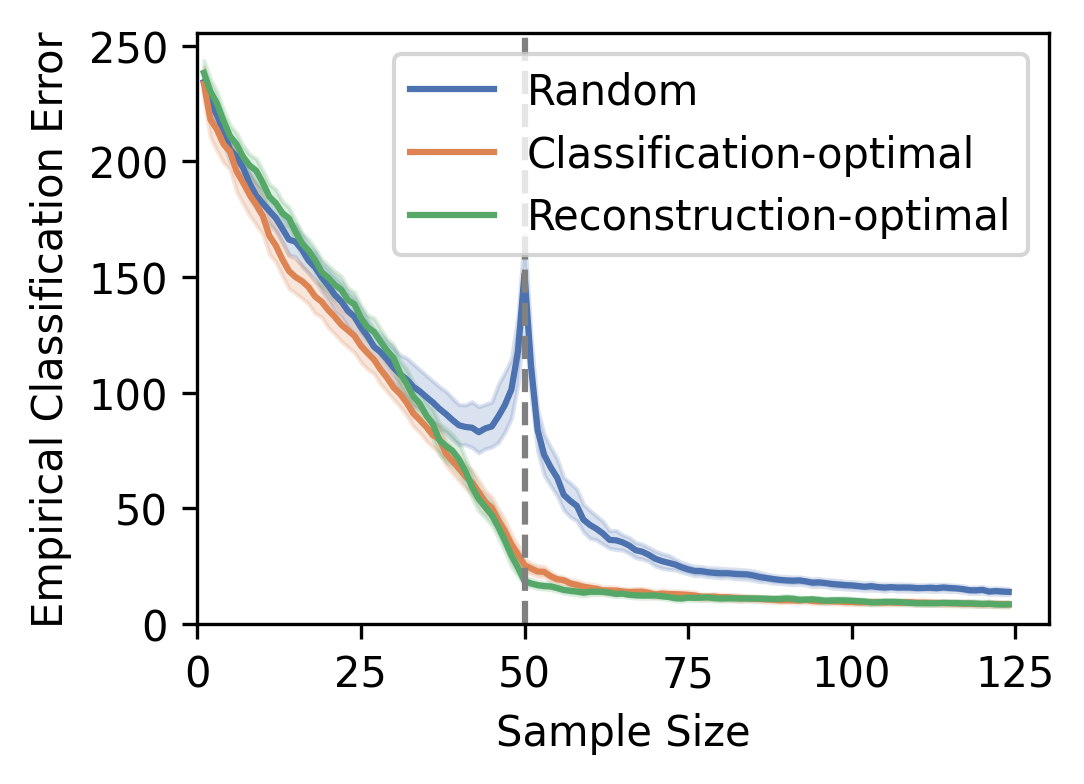}}
    \caption{Class. Loss (Empirical)}%
    \label{subsubfig:BA_noisy_sgc_LS_class_emp}
    \end{subfigure}%
    \caption{LS Reconstruction, BA graph model, Noisy ($20dB$ signal)}
    \vspace{-0.19cm}
\label{subfig:BA_noisy_vs_noisy_sgc_LS}
\end{figure*}

\begin{figure*}[h]%
    \centering
    \begin{subfigure}[T]{0.5\columnwidth}
    \resizebox{\width}{0.62\columnwidth}{
    \includegraphics[width=\columnwidth]{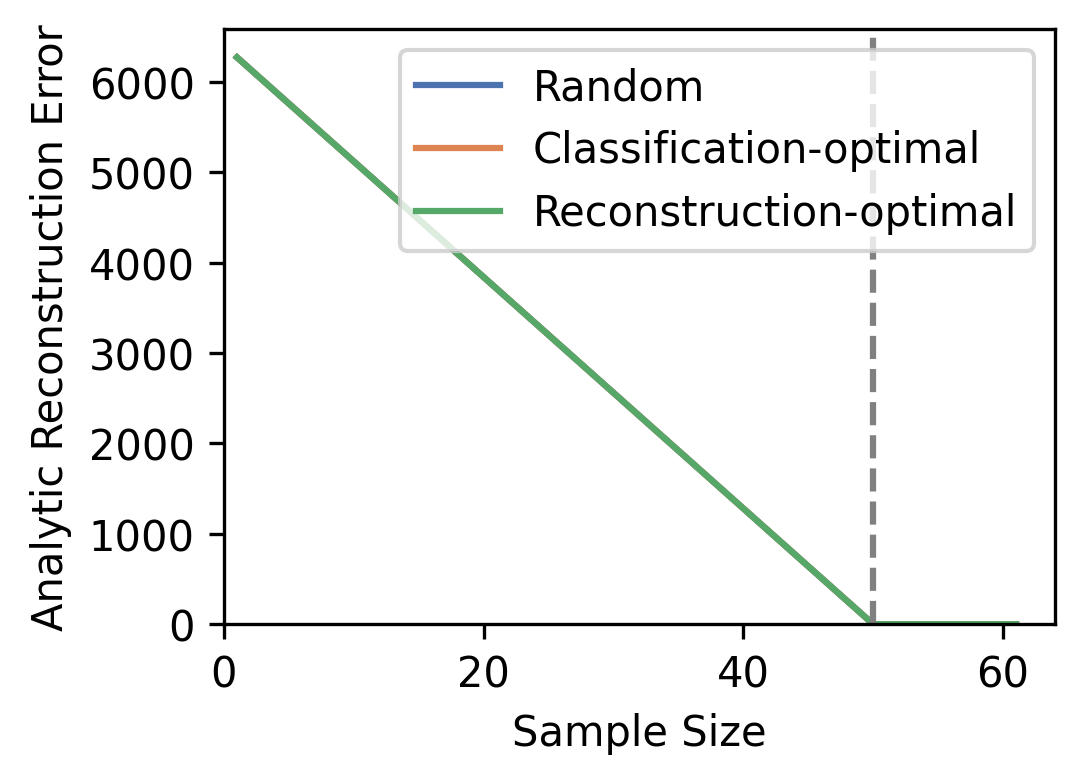}}
    \caption{Reconstruction Loss}%
    \label{subsubfig:SBM_noiseless_sgc_LS_rec}
    \end{subfigure}
    \begin{subfigure}[T]{0.5\columnwidth}
    \resizebox{\width}{0.62\columnwidth}{
    \includegraphics[width=\columnwidth]{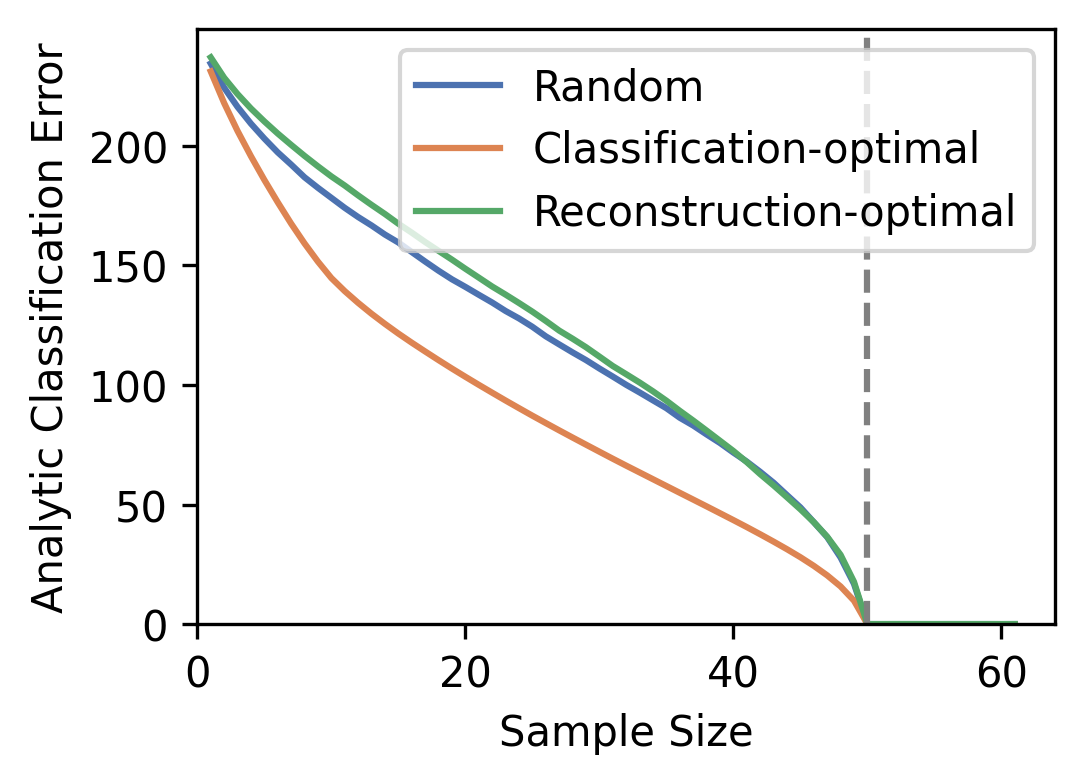}}
    \caption{Class. Loss (Analytic)}
    \label{subsubfig:SBM_noiseless_sgc_LS_class_anal}
    \end{subfigure}%
    \begin{subfigure}[T]{0.5\columnwidth}
    \resizebox{\width}{0.62\columnwidth}{
    \includegraphics[width=\columnwidth]{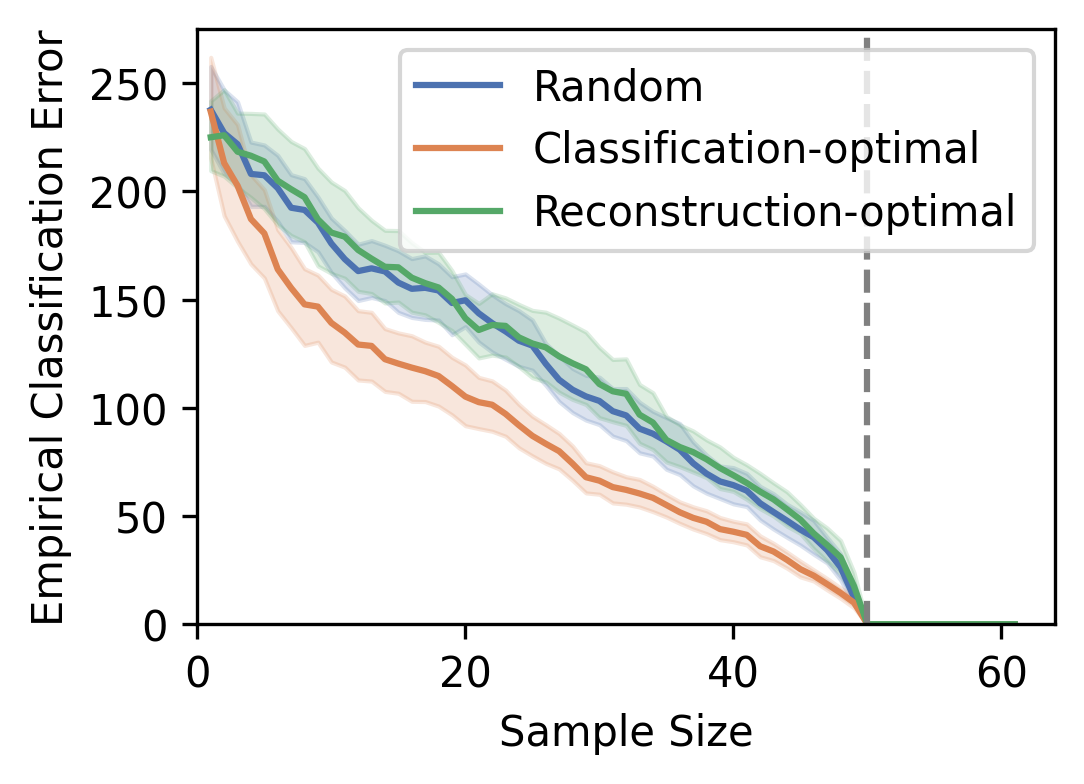}}
    \caption{Class. Loss (Empirical)}%
    \label{subsubfig:SBM_noiseless_sgc_LS_class_emp}
    \end{subfigure}%
    \caption{LS Reconstruction, SBM graph model, noiseless}
    \vspace{-0.19cm}
\label{subfig:SBM_noiseless_sgc_LS}
\end{figure*}

\begin{figure}[h!btp]%
    \begin{subfigure}[T]{0.45\columnwidth}
    \resizebox{\width}{0.62\columnwidth}{
    \includegraphics[width=\columnwidth]{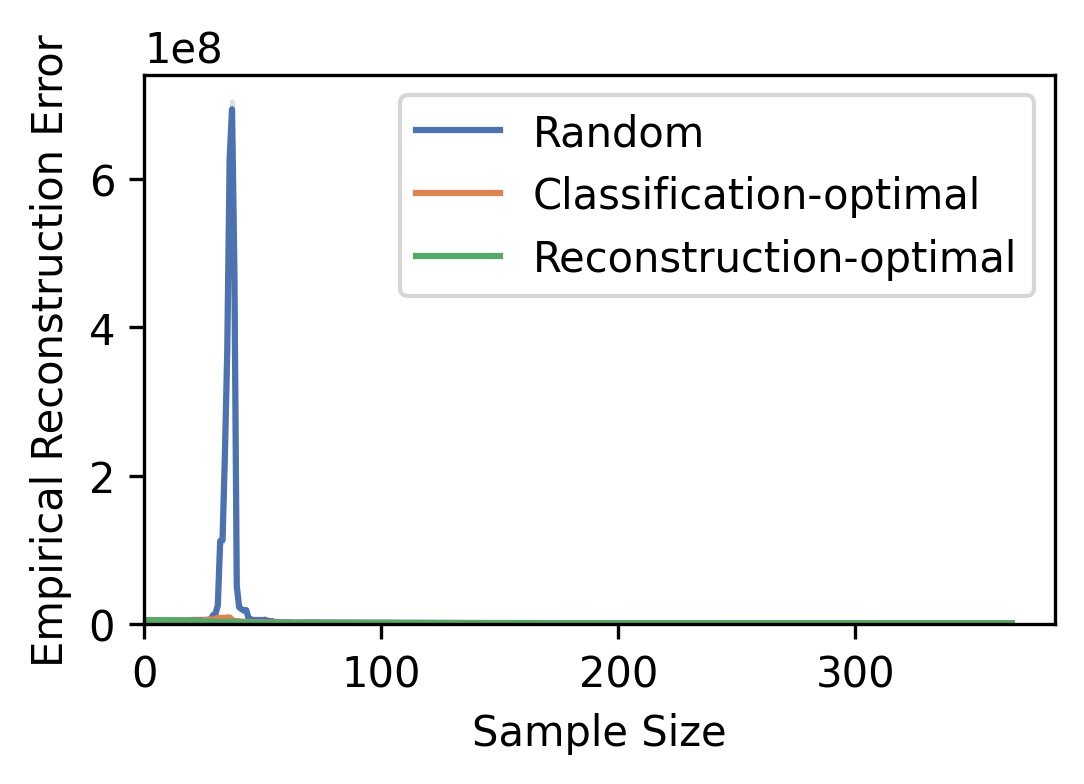}}
    \caption{Rec. Loss (Empirical)}%
    \label{subfig:fmri_LS_rec}
    \end{subfigure}
    \begin{subfigure}[T]{0.45\columnwidth}
    \resizebox{\width}{0.62\columnwidth}{
    \includegraphics[width=\columnwidth]{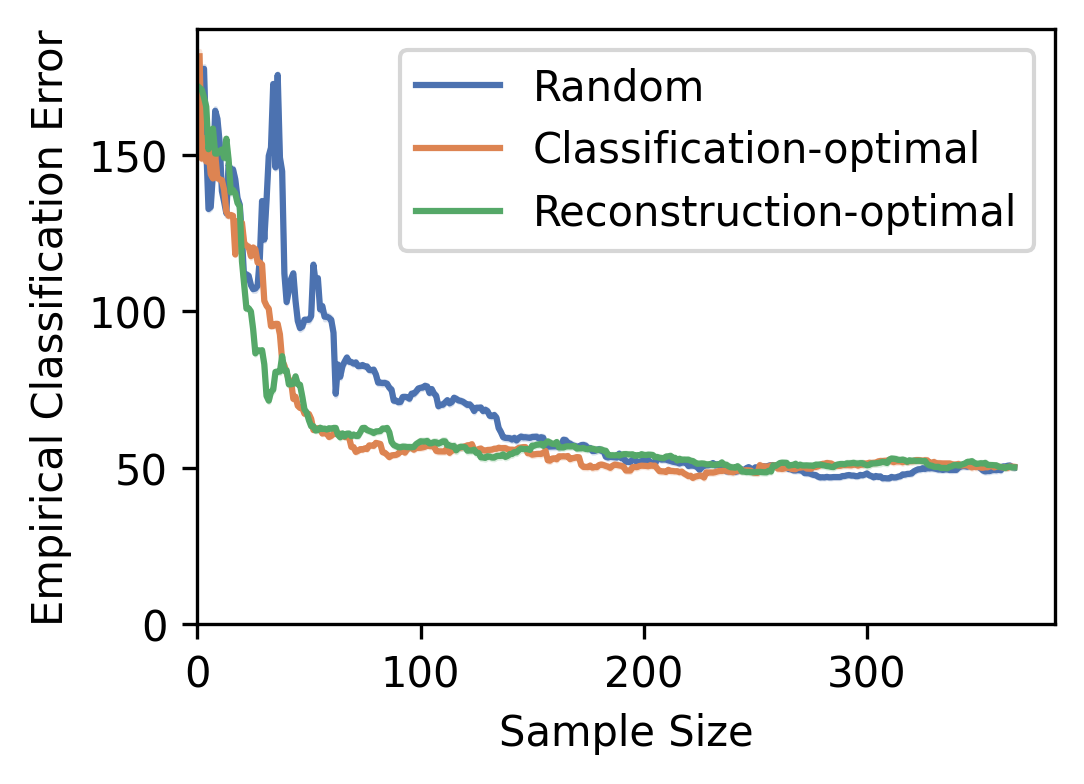}}
    \caption{Class. Loss (Empirical)}
    \label{subfig:fmri_noiseless_sgc_LS_class_anal}
    \end{subfigure}%
    \caption{FMRI dataset, LS reconstruction, sampled assuming $k=36$ and $\text{SNR} = 20dB$}
\label{fig:fmri}
\end{figure}

\subsection{Experimental Validation}
We experimentally validate our derivation of expected classification loss for different graph models, noise levels, reconstruction methods and versions of $f$. Subfigures (b) and (c) of Figs. \ref{subfig:BA_noiseless_sgc_LS}-\ref{subfig:SBM_noiseless_sgc_LS} display the empirical loss calculations via directly reconstructing signals, and the analytic loss calculations using Theorem \ref{thm:general_class_err} respectively. We see that the empirical results largely agree with our theoretical derivations.

We also see that under \bs{noiseless} LS reconstruction (Figs. \ref{subfig:BA_noiseless_sgc_LS} and \ref{subfig:SBM_noiseless_sgc_LS}), our classification loss looks like $x \to \arccos\left(\sqrt{x}\right)$, which follows the theoretical interpretation in \bs{\ref{sec:Gen_Theory}}.

\subsection{Optimal Sampling}
\label{sec:optimal_sampling_results}
In the noiseless setting, under LS reconstruction, our method outperforms both random and A-optimal sampling (Fig. 
\ref{subsubfig:BA_noiseless_sgc_LS_class_anal}) by a small but consistent margin, with no difference in reconstruction error (Fig. \ref{subsubfig:BA_noiseless_sgc_LS_rec}). This margin is even larger in the SBM case (Fig. \ref{subsubfig:SBM_noiseless_sgc_LS_class_anal}). Surprisingly, A-optimal sampling underperforms random sampling; optimizing for reconstruction loss is worse than random sampling under classification loss. \bs{This emphasises how sampling schemes need to be designed in light of the downstream task.}

In the FP reconstruction case (Fig. \ref{subfig:BA_noiseless_sgc_feat_prop}), we see that both forms of optimal sampling are approximately equal, with both beating random sampling significantly. We do see that reconstruction-optimal sampling marginally beats classification sampling under reconstruction loss, and vice-versa for classification loss.

In the noisy case \bs{(Fig. \ref{subfig:BA_noisy_vs_noisy_sgc_LS})}, we see classification-optimal sampling has lower classification loss except for sample sizes near the bandwidth (marked with the vertical dashed line); near the bandwidth, the impact of noise is the highest for LS \cite{sripathmanathan2024impact} and thus reconstruction-optimal sampling, which optimizes for noise sensitivity more aggressively, has both lower reconstruction and classification \bs{loss}. This is an artefact of the greedy construction of our sampling sets. We refer to \cite{sripathmanathan2024impact} for analysis of the spike in reconstruction error in Fig. \ref{subsubfig:BA_noisy_sgc_LS_rec}.

In our real-world dataset \bs{(Fig. \ref{fig:fmri})} we find, for classification loss, both reconstruction and classification-optimal sampling outperforming random sampling, with our novel classification-optimal sampling scheme marginally outperforming at sample sizes over 90.

\section{Conclusion}
In this paper we have studied how classification differs from reconstruction in the presence of noise and missing data. We have given theoretical and empirical characterisation of the relationship between classification and reconstruction error, where classification is done by a linearized GCN. We have shown that reconstruction-optimal sampling can underperform random sampling, which emphasises the point that sampling schemes need to be designed with the specific task in mind. We have then presented a novel optimal sampling scheme specifically for classification with a GCN, and showed its performance across multiple graph models, reconstruction methods and noise levels.

\clearpage
\bibliographystyle{IEEEtran}
\bibliography{combo}

\begin{thebibliography}{10}
\providecommand{\url}[1]{#1}
\csname url@samestyle\endcsname
\providecommand{\newblock}{\relax}
\providecommand{\bibinfo}[2]{#2}
\providecommand{\BIBentrySTDinterwordspacing}{\spaceskip=0pt\relax}
\providecommand{\BIBentryALTinterwordstretchfactor}{4}
\providecommand{\BIBentryALTinterwordspacing}{\spaceskip=\fontdimen2\font plus
\BIBentryALTinterwordstretchfactor\fontdimen3\font minus \fontdimen4\font\relax}
\providecommand{\BIBforeignlanguage}[2]{{%
\expandafter\ifx\csname l@#1\endcsname\relax
\typeout{** WARNING: IEEEtran.bst: No hyphenation pattern has been}%
\typeout{** loaded for the language `#1'. Using the pattern for}%
\typeout{** the default language instead.}%
\else
\language=\csname l@#1\endcsname
\fi
#2}}
\providecommand{\BIBdecl}{\relax}
\BIBdecl

\bibitem{itani2021graph}
S.~Itani and D.~Thanou, ``A graph signal processing framework for the classification of temporal brain data,'' in \emph{2020 28th European Signal Processing Conference (EUSIPCO)}.\hskip 1em plus 0.5em minus 0.4em\relax IEEE, 2021, pp. 1180--1184.

\bibitem{jain2014big}
R.~K. Jain, J.~M. Moura, and C.~E. Kontokosta, ``Big data+ big cities: Graph signals of urban air pollution [exploratory sp],'' \emph{IEEE Signal Processing Magazine}, vol.~31, no.~5, pp. 130--136, 2014.

\bibitem{renoust2017estimating}
B.~Renoust, G.~Cheung, and S.~Satoh, ``Estimating political leanings from mass media via graph-signal restoration with negative edges,'' in \emph{2017 IEEE International Conference on Multimedia and Expo (ICME)}.\hskip 1em plus 0.5em minus 0.4em\relax IEEE, 2017, pp. 1009--1014.

\bibitem{szklarczyk2019string}
D.~Szklarczyk, A.~L. Gable, D.~Lyon, A.~Junge, S.~Wyder, J.~Huerta-Cepas, M.~Simonovic, N.~T. Doncheva, J.~H. Morris, P.~Bork \emph{et~al.}, ``String v11: protein--protein association networks with increased coverage, supporting functional discovery in genome-wide experimental datasets,'' \emph{Nucleic acids research}, vol.~47, no.~D1, pp. D607--D613, 2019.

\bibitem{ortega2018graph}
A.~Ortega, P.~Frossard, J.~Kova{\v{c}}evi{\'c}, J.~M. Moura, and P.~Vandergheynst, ``Graph signal processing: Overview, challenges, and applications,'' \emph{Proceedings of the IEEE}, vol. 106, no.~5, pp. 808--828, 2018.

\bibitem{wang2018optimal}
F.~Wang, Y.~Wang, and G.~Cheung, ``A-optimal sampling and robust reconstruction for graph signals via truncated neumann series,'' \emph{IEEE Signal Processing Letters}, vol.~25, no.~5, pp. 680--684, 2018.

\bibitem{wang2019low}
F.~Wang, G.~Cheung, and Y.~Wang, ``Low-complexity graph sampling with noise and signal reconstruction via neumann series,'' \emph{IEEE Transactions on Signal Processing}, vol.~67, no.~21, pp. 5511--5526, 2019.

\bibitem{mfn}
M.~Tsitsvero, S.~Barbarossa, and P.~Di~Lorenzo, ``Signals on graphs: Uncertainty principle and sampling,'' \emph{IEEE Transactions on Signal Processing}, vol.~64, no.~18, pp. 4845--4860, 2016.

\bibitem{tremblay2017graph}
N.~Tremblay, P.-O. Amblard, and S.~Barthelm{\'e}, ``Graph sampling with determinantal processes,'' in \emph{2017 25th European Signal Processing Conference (EUSIPCO)}.\hskip 1em plus 0.5em minus 0.4em\relax IEEE, 2017, pp. 1674--1678.

\bibitem{jayawant2021doptimal}
A.~Jayawant and A.~Ortega, ``Practical graph signal sampling with log-linear size scaling,'' \emph{Signal Processing}, vol. 194, p. 108436, 2022.

\bibitem{puy2018random}
G.~Puy, N.~Tremblay, R.~Gribonval, and P.~Vandergheynst, ``Random sampling of bandlimited signals on graphs,'' \emph{Applied and Computational Harmonic Analysis}, vol.~44, no.~2, pp. 446--475, 2018.

\bibitem{pakiyarajah2025graph}
D.~Pakiyarajah, E.~Pavez, and A.~Ortega, ``Graph-based signal sampling with adaptive subspace reconstruction for spatially-irregular sensor data,'' in \emph{ICASSP 2025-2025 IEEE International Conference on Acoustics, Speech and Signal Processing (ICASSP)}.\hskip 1em plus 0.5em minus 0.4em\relax IEEE, 2025, pp. 1--5.

\bibitem{bai2020fast}
Y.~Bai, F.~Wang, G.~Cheung, Y.~Nakatsukasa, and W.~Gao, ``Fast graph sampling set selection using gershgorin disc alignment,'' \emph{IEEE Transactions on signal processing}, vol.~68, pp. 2419--2434, 2020.

\bibitem{eoptimalchen}
S.~Chen, R.~Varma, A.~Sandryhaila, and J.~Kova{\v{c}}evi{\'c}, ``Discrete signal processing on graphs: Sampling theory,'' \emph{IEEE transactions on signal processing}, vol.~63, no.~24, pp. 6510--6523, 2015.

\bibitem{pukelsheim2006optimal}
F.~Pukelsheim, \emph{Optimal design of experiments}.\hskip 1em plus 0.5em minus 0.4em\relax SIAM, 2006.

\bibitem{zhang2025graph}
X.~Zhang and Q.~Wang, ``Graph learning from incomplete graph signals: From batch to online methods,'' \emph{Signal Processing}, vol. 226, p. 109663, 2025.

\bibitem{berger2020efficient}
P.~Berger, G.~Hannak, and G.~Matz, ``Efficient graph learning from noisy and incomplete data,'' \emph{IEEE Transactions on Signal and Information Processing over Networks}, vol.~6, pp. 105--119, 2020.

\bibitem{sridhara2024towards}
S.~N. Sridhara, E.~Pavez, and A.~Ortega, ``Towards joint graph learning and sampling set selection from data,'' in \emph{2024 58th Asilomar Conference on Signals, Systems, and Computers}.\hskip 1em plus 0.5em minus 0.4em\relax IEEE, 2024, pp. 1168--1172.

\bibitem{pesenson2008sampling}
I.~Pesenson, ``Sampling in paley-wiener spaces on combinatorial graphs,'' \emph{Transactions of the American Mathematical Society}, vol. 360, no.~10, pp. 5603--5627, 2008.

\bibitem{chamon2016near}
L.~F. Chamon and A.~Ribeiro, ``Near-optimality of greedy set selection in the sampling of graph signals,'' in \emph{2016 IEEE Global Conference on Signal and Information Processing (GlobalSIP)}.\hskip 1em plus 0.5em minus 0.4em\relax IEEE, 2016, pp. 1265--1269.

\bibitem{chamon2017greedy}
------, ``Greedy sampling of graph signals,'' \emph{IEEE Transactions on Signal Processing}, vol.~66, no.~1, pp. 34--47, 2017.

\bibitem{shomorony2014sampling}
H.~Shomorony and A.~S. Avestimehr, ``Sampling large data on graphs,'' in \emph{2014 IEEE Global Conference on Signal and Information Processing (GlobalSIP)}.\hskip 1em plus 0.5em minus 0.4em\relax IEEE, 2014, pp. 933--936.

\bibitem{chen2016signal}
S.~Chen, R.~Varma, A.~Singh, and J.~Kova{\v{c}}evi{\'c}, ``Signal recovery on graphs: Fundamental limits of sampling strategies,'' \emph{IEEE Transactions on Signal and Information Processing over Networks}, vol.~2, no.~4, pp. 539--554, 2016.

\bibitem{zhou2003learning}
D.~Zhou, O.~Bousquet, T.~Lal, J.~Weston, and B.~Sch{\"o}lkopf, ``Learning with local and global consistency,'' \emph{Advances in neural information processing systems}, vol.~16, 2003.

\bibitem{jiang2020incomplete}
B.~Jiang and Z.~Zhang, ``Incomplete graph representation and learning via partial graph neural networks,'' \emph{arXiv preprint arXiv:2003.10130}, 2020.

\bibitem{chen2020learning}
X.~Chen, S.~Chen, J.~Yao, H.~Zheng, Y.~Zhang, and I.~W. Tsang, ``Learning on attribute-missing graphs,'' \emph{IEEE transactions on pattern analysis and machine intelligence}, vol.~44, no.~2, pp. 740--757, 2020.

\bibitem{kalofolias2014matrix}
V.~Kalofolias, X.~Bresson, M.~Bronstein, and P.~Vandergheynst, ``Matrix completion on graphs,'' \emph{arXiv preprint arXiv:1408.1717}, 2014.

\bibitem{spinelli2020missing}
I.~Spinelli, S.~Scardapane, and A.~Uncini, ``Missing data imputation with adversarially-trained graph convolutional networks,'' \emph{Neural Networks}, vol. 129, pp. 249--260, 2020.

\bibitem{rossi2021unreasonable}
E.~Rossi, H.~Kenlay, M.~I. Gorinova, B.~P. Chamberlain, X.~Dong, and M.~Bronstein, ``On the unreasonable effectiveness of feature propagation in learning on graphs with missing node features,'' \emph{arXiv preprint arXiv:2111.12128}, 2021.

\bibitem{taguchi2021graph}
H.~Taguchi, X.~Liu, and T.~Murata, ``Graph convolutional networks for graphs containing missing features,'' \emph{Future Generation Computer Systems}, vol. 117, pp. 155--168, 2021.

\bibitem{you2020handling}
J.~You, X.~Ma, Y.~Ding, M.~J. Kochenderfer, and J.~Leskovec, ``Handling missing data with graph representation learning,'' \emph{Advances in Neural Information Processing Systems}, vol.~33, pp. 19\,075--19\,087, 2020.

\bibitem{sripathmanathan2024impact}
B.~Sripathmanathan, X.~Dong, and M.~Bronstein, ``On the impact of sample size in reconstructing noisy graph signals: A theoretical characterisation,'' \emph{arXiv preprint arXiv:2406.16816}, 2024.

\bibitem{dong2016learning}
X.~Dong, D.~Thanou, P.~Frossard, and P.~Vandergheynst, ``Learning laplacian matrix in smooth graph signal representations,'' \emph{IEEE Transactions on Signal Processing}, vol.~64, no.~23, pp. 6160--6173, 2016.

\bibitem{wu2019simplifying}
F.~Wu, A.~Souza, T.~Zhang, C.~Fifty, T.~Yu, and K.~Weinberger, ``Simplifying graph convolutional networks,'' in \emph{International conference on machine learning}.\hskip 1em plus 0.5em minus 0.4em\relax Pmlr, 2019, pp. 6861--6871.

\bibitem{defferrard2016convolutional}
M.~Defferrard, X.~Bresson, and P.~Vandergheynst, ``Convolutional neural networks on graphs with fast localized spectral filtering,'' \emph{Advances in neural information processing systems}, vol.~29, 2016.

\bibitem{guo2023Favard}
\BIBentryALTinterwordspacing
Y.~Guo and Z.~Wei, ``Graph neural networks with learnable and optimal polynomial bases,'' in \emph{Proceedings of the 40th International Conference on Machine Learning}, ser. Proceedings of Machine Learning Research, A.~Krause, E.~Brunskill, K.~Cho, B.~Engelhardt, S.~Sabato, and J.~Scarlett, Eds., vol. 202.\hskip 1em plus 0.5em minus 0.4em\relax PMLR, 23--29 Jul 2023, pp. 12\,077--12\,097. [Online]. Available: \url{https://proceedings.mlr.press/v202/guo23i.html}
\BIBentrySTDinterwordspacing

\bibitem{guo2023manipulating}
J.~Guo, L.~Du, X.~Chen, X.~Ma, Q.~Fu, S.~Han, D.~Zhang, and Y.~Zhang, ``On manipulating signals of user-item graph: A jacobi polynomial-based graph collaborative filtering,'' in \emph{Proceedings of the 29th ACM SIGKDD Conference on Knowledge Discovery and Data Mining}, 2023, pp. 602--613.

\bibitem{he2021bernnet}
M.~He, Z.~Wei, H.~Xu \emph{et~al.}, ``Bernnet: Learning arbitrary graph spectral filters via bernstein approximation,'' \emph{Advances in Neural Information Processing Systems}, vol.~34, pp. 14\,239--14\,251, 2021.

\bibitem{horn-and-johnson}
R.~A. Horn and C.~R. Johnson, \emph{Matrix analysis}.\hskip 1em plus 0.5em minus 0.4em\relax Cambridge university press, 2012.

\bibitem{luo2024classic}
\BIBentryALTinterwordspacing
Y.~Luo, L.~Shi, and X.-M. Wu, ``Classic {GNN}s are strong baselines: Reassessing {GNN}s for node classification,'' in \emph{The Thirty-eight Conference on Neural Information Processing Systems Datasets and Benchmarks Track}, 2024. [Online]. Available: \url{https://openreview.net/forum?id=xkljKdGe4E}
\BIBentrySTDinterwordspacing

\bibitem{glorot2010understanding}
X.~Glorot and Y.~Bengio, ``Understanding the difficulty of training deep feedforward neural networks,'' in \emph{Proceedings of the thirteenth international conference on artificial intelligence and statistics}.\hskip 1em plus 0.5em minus 0.4em\relax JMLR Workshop and Conference Proceedings, 2010, pp. 249--256.

\bibitem{zhi2023gaussian}
Y.-C. Zhi, Y.~C. Ng, and X.~Dong, ``Gaussian processes on graphs via spectral kernel learning,'' \emph{IEEE Transactions on Signal and Information Processing over Networks}, vol.~9, pp. 304--314, 2023.

\bibitem{hamilton2017inductive}
W.~Hamilton, Z.~Ying, and J.~Leskovec, ``Inductive representation learning on large graphs,'' \emph{Advances in neural information processing systems}, vol.~30, 2017.

\end{thebibliography}

\end{document}